\newcommand{\Path}[1]{\left< #1 \right>}
\newcommand{\eps}{\epsilon}
\newcommand{\Dist}{\mathsf{dist}}
\newcommand{\DDist}{\mathsf{ddist}}
\newcommand{\JDist}{\mathsf{jdist}}
\newcommand{\Minlength}{\mathsf{minL}}
\newcommand{\Minpath}{\mathsf{MinP}}
\newcommand{\Iset}{\Psi}
\newcommand{\Query}{\textsf{dso-query}}
\newcommand{\Fquery}{\textsf{pf-query}}
\newcommand{\Dquery}{\textsf{dp-query}}
\newcounter{cntLemmaNumber}
\newenvironment{rlemma}[1]{%
\setcounter{cntLemmaNumber}{\thelemma}
\setcounterref{lemma}{#1}
\addtocounter{lemma}{-1}
\begin{lemma}
}{%
\end{lemma}
\setcounter{lemma}{\thecntLemmaNumber}
}
\newcounter{cntTheoremNumber}
\newcounter{cntPropositionNumber}
\title{A Nearly Linear Time Construction of Approximate Single-Source Distance Sensitivity Oracles} 
\titlerunning{Approximate Single-Source Distance Sensitivity Oracles} 
\authorrunning{K. Harada, N. Kitamura, T. Izumi, T. Masuzawa}
\author{Kaito Harada}{Osaka University, Suita, Japan}{k-harada@ist.osaka-u.ac.jp}{}{}
\author{Naoki Kitamura}{Osaka University, Suita, Japan}{n-kitamura@ist.osaka-u.ac.jp}{}{}
\author{Taisuke Izumi}{Osaka University, Suita, Japan}{t-izumi@ist.osaka-u.ac.jp}{}{}
\author{Toshimitsu Masuzawa}{Osaka University, Suita, Japan}{masuzawa@ist.osaka-u.ac.jp}{}{}
\keywords{data structure, distance sensitivity oracle, replacement path problem, graph algorithm} 
\begin{document}

\maketitle

\begin{abstract}
An \emph{$\alpha$-approximate vertex fault-tolerant distance sensitivity oracle} (\emph{$\alpha$-VSDO}) for 
a weighted input graph $G=(V, E, w)$ and a source vertex $s \in V$ is the data structure answering 
an $\alpha$-approximate distance from $s$ to $t$ in $G-x$ for any given query $(x, t) \in V \times V$. It is a data structure version of the so-called single-source replacement path problem (SSRP). In this paper, we present a new \emph{nearly linear-time} algorithm of constructing
a $(1 + \eps)$-VSDO for any directed input graph with polynomially bounded integer edge weights. More precisely, 
the presented oracle attains 
$\tilde{O}(m \log (nW)/ \eps + n \log^2 (nW)/\eps^2)$\footnote{The $\tilde{O}(\cdot)$ notation omits polylogarithmic 
factors, i.e., $\tilde{O}(f(n)) = O(f(n) \mathrm{polylog}(n))$.} construction time, $\tilde{O}(n \log (nW) / \eps)$ size\footnote{Throughout this 
paper, size is measured by the number of words. One word is $O(\log n)$ bits.}, 
and $\tilde{O}(1/\eps)$ query time, where $n$ is the number of vertices, $m$ is the number of edges, and $W$ is the maximum 
edge weight. These bounds are all 
optimal up to polylogarithmic factors. To the best of our knowledge, this is the first non-trivial 
algorithm for SSRP/VSDO beating $\tilde{O}(mn)$ computation time for directed graphs with general edge weight functions, 
and also the first nearly linear-time construction breaking approximation factor 3. Such a construction 
has been unknown even for undirected and unweighted graphs.
In addition, our result implies that the known conditional lower bounds for the exact SSRP computation does 
not apply to the case of approximation.

\end{abstract}

\newpage

\section{Introduction}
\subsection{Background and Our Result}
\label{sec:background}

The fault of links and vertices is ubiquitous in real-world networks. In fault-prone networks, it is important to develop an algorithm that quickly recompute the desired solution for a given fault pattern.  For example, suppose that a client wants to send information to all guests in the network, but the shortest paths can be disconnected by faults. Then, it needs to find the shortest paths avoiding all failed entities, so-called \emph{replacement paths}.
In this paper, we consider the \emph{Single Source Replacement Path Problem} (\emph{SSRP}) for directed graph $G=(V(G), E(G))$
with positive edge weights, which requires us to find the single-source shortest paths for all possible single vertex-fault patterns. That is,
its output is the distances from a given source vertex $s$ to any vertex $t \in V(G)$ in graphs $G - x$ for every $x \in V(G)$. This problem is known as one of the fundamental problems, not only in the context of fault-tolerance, but 
also in the auction theory~\cite{HS01,NR99}. While SSRP is also considered in the edge-fault case, it is easily reduced to the vertex-fault case of SSRP. 
Hence this paper only focuses on 
vertex faults.

A trivial algorithm for SSRP is to solve the single-source shortest path problem on $G-x$ for each failed vertex 
$x \in V(G) \setminus \{s\}$.
The running time of this algorithm is $\tilde{O}(mn)$, where $n$ is the number of vertices and $m$ 
is the number of edges. While it is an intriguing question whether non-trivial speedup of computing SSRP is possible or not, a variety of conditional lower bounds has been presented (see Table~\ref{table:lowerbound}). Focusing on directed graphs
with arbitrary positive edge weights, $\Omega(mn)$ time lower bound in the path comparison model~\cite{KKP93} has been proved by Hershberger, Suri, and Bhosle~\cite{HSB07}. A similar
fine-grained complexity barrier is also provided by Chechik and Cohen~\cite{CC19}, which exhibits $\tilde{\Omega}(mn^{1 - \delta})$ lower bound for any small constant 
$\delta > 0$ under the assumption that there exists no boolean matrix multiplication algorithm 
running in $O(mn^{1 - \delta})$ time for $n \times n$ matrices with a total number of $m$ $1$s. Due to these
results, the quest for much faster algorithms essentially requires some relaxation on the
problem setting. The research line considering undirected and/or unweighted graphs recently
yields much progress. For directed and unweighted graphs, a randomized SSRP algorithm running in $\tilde{O}(m\sqrt{n} + n^2)$ time has been shown~\cite{CM20}, which is also  
proved conditionally tight under some complexity-theoretic hypotheses. 

In this paper, we aim to circumvent the barriers above by admitting approximation. 
While such an approximation approach is known and succeeded 
for the $s$-$t$ replacement path problem (RP)~\cite{Ber10}, there have been no results so far for SSRP. A trivial 
bottleneck of SSRP is $O(n^2)$ term to output the solution containing the distances for all possible pairs of 
a failed vertex and a target vertex. This bottleneck slightly spoils the challenge to approximate SSRP algorithms 
of $o(m\sqrt{n}) + O(n^2)$ running time because it benefits only when $m = \omega(n^{3/2})$ holds. Hence this paper 
focuses on the fast construction of the \emph{$\alpha$-approximate vertex fault-tolerant distance sensitivity oracle} 
(\emph{$\alpha$-VSDO}), which is the compact (i.e. $o(n^2)$ size) data structure answering an $\alpha$-approximate 
distance from $s$ to $t$ in $G-x$ for any query $(x, t)$. This is naturally regarded as the data structure version 
of SSRP, and due to its compactness, the time for outputting the solution does not matter anymore. 
There are several results on the construction of $\alpha$-DSOs. 
However, in the case of $\alpha = (1 + \eps)$, 
no construction algorithm running faster than $O(m\sqrt{n})$ time is invented, even for undirected and unweighted graphs (see Tables~\ref{table:dso} and~\ref{table:rp}). The results explicitly faster
than $O(m\sqrt{n})$ time are by Baswana and Khanna~\cite{BK13} and Bil{\`o}, Guala, Leucci, and Proietti~\cite{BGLP18}, which attain only $3$-approximation for undirected graphs. Our main contribution is to solve this open problem positively and almost completely. 
The main theorem is stated as follows:
\begin{theorem} \label{thm:main}
    Given any directed graph $G$ with edge weights in range $[1, W]$, a source vertex $s \in V(G)$, and a constant $\eps \in (0, 1]$, there exists a deterministic algorithm of constructing a $(1+\eps)$-VSDO of size $O(\eps^{-1} n \log^3 n \cdot \log (nW))$. The construction time and the query processing time are respectively 
$O(\eps^{-1} \log^4 n \cdot \log (nW) (m + n \eps^{-1} \cdot \log^3 n \cdot \log (nW)))$ and 
    $O(\log^2 n \cdot \log (\eps^{-1} \log (nW)))$.
\end{theorem}
For polynomially-bounded edge weights and $\epsilon = \Omega(1/\mathrm{polylog}(n))$, one oracle attains $\tilde{O}(m)$ 
construction time, $\tilde{O}(n)$ size, and $\tilde{O}(1)$ query processing time, which are optimal up to polylogarithmic factors.
It also deduces an algorithm for $(1+ \eps)$-approximate SSRP running in $\tilde{O}(m + n^2)$ trivially. To the best of our knowledge, this is the first non-trivial result for SSRP/VSDO 
beating $O(mn)$ computation time for directed graphs with polynomially-bounded positive edge weights, and also the first 
nearly linear-time construction of breaking approximation factor 3. Such a construction has been unknown even for undirected 
and unweighted graphs.

We emphasize that nearly linear time construction of approximate DSOs inherently faces the  
challenge that oracles must be built \emph{without explicitly computing SSRP}. If $\mathrm{poly}(n)$ 
construction time is admitted, one can adopt the two-phase approach which 
computes the SSRP at first, and then compresses the computation result into a small-size oracle. 
In fact, this approach is adopted by most of known $(1 + \epsilon)$-approximate constructions.
It is, however, impossible 
in considering $\tilde{O}(m)$-time construction.
The high-level structure of our construction algorithm is the combination of the divide-and-conquer approach used in 
Grandoni and Vassilevska Williams \cite{GV12} and Chechik and Magen\cite{CM20}, and the \emph{progressive Dijkstra} 
algorithm by Bernstein~\cite{Ber10} originally designed for solving the approximate 
$s$-$t$ replacement path problem. The crux of our result is to demonstrate that the progressive Dijkstra provides 
much useful information for computing approximate SSRP beyond RP, by carefully installing it into the approach of \cite{CM20,GV12}. 
In addition, as a by-product, we refine the original progressive Dijkstra algorithm 
into a simpler and easy-to-follow form, which is of independent interest and potentially useful for its future applications.

\begin{table}[t]
  \caption{Known Results for EDSO/VDSO. The column ``problem'' describes 
the target problem of each result, 
where the first additional character V/E represents the fault model (vertex fault/edge fault). The column ``input'' describes the type of graphs each result covers. In the notation
$X/Y$, $X$ is either D (directed) or U (undirected), and the $Y$ is either 
U (unweighted), $+$ (positive edge weight), or $\pm$ (arbitrary edge weight). 
The column ``apx.'' represents approximation factors. The symbol $L$ is 
the shorthand of $\eps^{-1} \log (nW)$. 
The dagger mark $\dagger$ implies some additional features not described in the table, which is explained in Section~\ref{subsec:relatedwork}.}
  \label{table:dso}
  \centering
  \small
  \begin{tabular}{ccccccc}
    \hline
    ref & problem & input & apx. & size & construction & query\\
    \hline
    DT02\cite{DT02} & EDSO$^{\dagger}$ & D/$+$ & 1 & $O(n^2\log n)$ & $\tilde{O}(mn^2)$ & $\tilde{O}(1)$ \\
    BK09\cite{BK09} & EDSO$^{\dagger}$ & D/$+$ & 1 & $O(n^2)$ & $\tilde{O}(mn)$ & $O(1)$ \\
    BK13\cite{BK13} & VDSO & U/$+$ & 3 & $O(n \log n)$ & $\tilde{O}(m)$ & $O(1)$ \\
    BK13\cite{BK13} & VDSO & U/U & $1+\eps$ & $O(\frac{n}{\eps^3} + n\log n)$ & $O(m\sqrt{\frac{n}{\eps}})$ & $O(1)$ \\
    BGLP16\cite{BGLP16} & EDSO & U/$+$ & 2 & $O(n)$ & $\tilde{O}(mn)$ & $O(1)$\\
    BGLP16\cite{BGLP16} & EDSO & U/$+$ & $1+\eps$ & $O(\frac{n}{\eps} \log \frac{1}{\eps})$ & $\tilde{O}(mn)$ & $O(\log n \cdot\frac{1}{\eps} \log \frac{1}{\eps})$\\
    GS18\cite{GS18} & ESDO$^{\dagger}$ & U/U & 1 & $\tilde{O}(n^{3/2})$ & $\tilde{O}(mn)$ &
    $\tilde{O}(1)$ \\
    BGLP18\cite{BGLP18} & EDSO & U/$+$ & 3 & $O(n)$ & $\tilde{O}(m)$ & $O(1)$ \\
    BGLP18\cite{BGLP18} & EDSO & U/U & $1+\eps$ & $O(\frac{n}{\eps^3})$ & $O(m\sqrt{\frac{n}{\eps}})$ & $O(1)$ \\
    BCHR20\cite{BCHR20} & VDSO & D/$+$ & $1+\eps$ & $O(nL)$ & $O(mn)^{\dagger}$ & $O(\log L)$\\
    BCFS21\cite{BCFS21} & ESDO & U/U & 1 & $\tilde{O}(n^{3/2})$ & $\tilde{O}(m\sqrt{n} + n^2)$ & $\tilde{O}(1)$ \\
    
    DG22\cite{DG22} & ESDO & U/U & 1 & $\tilde{O}(n^{3/2})$ & $\tilde{O}(m\sqrt{n})$ &
    $\tilde{O}(1)$ \\

    \textbf{This paper} & VDSO & D/$+$ & $1+\eps$ & $O(nL\log^3 n)$ & $\tilde{O}(mL + nL^2)$ & $O(\log^2 n \cdot \log L)$\\
    \hline
  \end{tabular}

\vspace{3mm}

  \caption{Known Upper Bounds for RP and SSRP, where $\omega$ is the matrix multiplication exponent, and $\alpha(n, m)$ is the inverse of the Ackermann function.
}
  \label{table:rp}
  \centering
  \small
  \begin{tabular}{ccccc}
    \hline
    ref & problem & input & apx. & time \\
    \hline
    MMG89~\cite{MMG89} & ERP & U/$+$ & 1 & $O(m + n\log n)$ \\
    NPW03~\cite{NPW03-1} & VRP & U/$+$ & 1 & $O(m + n\log n)$ \\
    NPW03~\cite{NPW03-2} & ERP & U/$+$ & 1 & $O(m\alpha(n,m))$ \\
    Ber10~\cite{Ber10} & ERP & D/$+$ & $1 + \eps$ & $\tilde{O}(m \log L)$ \\
    Vas11~\cite{Vas11} & ERP & D/$\pm$ & 1 & $\tilde{O}(Wn^{\omega})$ \\
    GV12~\cite{GV12} & ESSRP  & D/$\pm$ & $1$ & $\tilde{O}(W^{\frac{1}{4 - \omega}}n^{2 + \frac{1}{4 - \omega}})$ \\
    GV12~\cite{GV12} & ESSRP  & D/$+$ & $1$ & $\tilde{O}(Wn^{\omega})$ \\
    RZ12~\cite{RZ12} & ERP   & D/U & 1 & $O(m\sqrt{n})$ \\
    CC19~\cite{CC19}  & ESSRP & U/U & 1 & $\tilde{O}(m\sqrt{n} + n^2)$ \\
    CM20~\cite{CM20}  & ESSRP & D/U$^{\dagger}$ & 1 & $\tilde{O}(m\sqrt{n} + n^2)$ \\
    GPWX21~\cite{GPWX21} & ESSRP & D/$\pm$ & 1 & $\tilde{O}(M^{\frac{5}{17 - 4 \omega}}n^{\frac{36-7\omega}{17 - 4 \omega}})$ \\
    \hline
  \end{tabular}
\end{table}

\begin{table}[t]
  \caption{Known conditional lower bounds for RP and SSRP. All results apply to the case of exact computation.}
  \label{table:lowerbound}
  \centering
  \small
  \begin{tabular}{ccccc}
    \hline
    ref & problem & input & time & model/hypothesis \\
    \hline
    HSB07~\cite{HSB07} & ERP  & D/$+$ & $\Omega(m\sqrt{n})$ & Path comparison model \\
    HSB07~\cite{HSB07} & ESSRP   & D/$+$ & $\Omega(mn)$ & Path comparison model \\
    VW18~\cite{VW18} & ERP & D/$\pm$ & $\Omega(n^{3- \delta})\log W$ & no subcubic APSP \\
    VW18~\cite{VW18} & ERP & D/U & $\Omega(mn^{1 /2 - \delta})$ & no subcubic BMM \\
    CC19~\cite{CC19}  & ESSRP & U/U & $\Omega(mn^{1/2 - \delta} + n^2)$ & no $O(mn^{1- \delta'})$ BMM \\
    CC19~\cite{CC19}  & Comb. E-SSRP & U/$+$ & $\Omega(mn^{1-\delta})$ & no Comb. subcubic APSP \\
    CC20~\cite{CM20}  & ESSRP & U/$+$ & $\Omega(mn^{1/2 - \delta} + n^2)$ & no subcubic APSP \\
    \hline
  \end{tabular}
\end{table}

\subsection{Related Work}
\label{subsec:relatedwork}

The known results directly related to our algorithm are summarized in Tables~\ref{table:dso}, 
\ref{table:rp}, and \ref{table:lowerbound}. We explain some supplementary remark on 
these tables. The BCHR20 construction~\cite{BCHR20} is only the $(1 + \eps)$-approximation result which does not 
rely on the explicit SSRP computation. However, the construction is based on the information hard to compute 
in $\tilde{O}(m)$ time (the information $\mathit{HD}_{\alpha}(v)$ presented at Section 3 in~\cite{BCHR20}). While 
the BCHR20 does not state the construction time explicitly, it requires $O(mn)$ time following 
the authors' observation. The SSRP algorithm by Chechik and Magen~\cite{CM20} 
is referred to as the one for unweighted graphs, but it can also cover bounded rational 
edge weights
in the range $[1, c]$ with a constant $c > 1$. It is not easy to transform 
this algorithm into an approximate SSRP handling general edge weights by scaling 
approaches, because edge weights are \emph{lower bounded} by one. Some of the results in 
Table~\ref{table:dso}~\cite{BK09,DT02,GS18} actually provide the more stronger oracles 
which support all-pair or multiple-source queries.

There are a variety of topics closely related to SSRP/DSO. 
A \emph{$f$-fault-tolerant (approximate) shortest path tree} is a subgraph of the input graph 
$G$ which contains a (approximate) shortest path tree in $G - F$ for any faulty edge set $F$ of size at most $f$~\cite{BGLP22,PP18,PP20}. Any $1$-fault tolerant shortest path tree can be 
seen as a graph representation of the output of (approximate) SSRP, and its construction
is closely related to the construction of DSOs. In fact, some of the results in 
Table~\ref{table:dso} also include the construction of fault-tolerant shortest path 
trees~\cite{BCHR20,BGLP18}. 

The oracles and algorithms covering multiple failures are also
investigated~\cite{BCCCFKS23,CCFK17,CZ24,DGR21,DR22,WWX22}. The most general structure 
is the \emph{$f$-sensitivity all-pairs oracle}, which returns an exact or approximate 
length of the shortest replacement path avoiding a given faulty edge set $F$ such that 
$|F| \leq f$ holds. Currently, most of oracles covering general $f$ faults assume
undirected graphs, and the construction time and the oracle size are essentially 
more expensive than single-source 1-sensitivity oracles.

\section{Preliminary}
\label{sec:preliminary}


Let $V(H)$ and $E(H)$ denote the vertex and edge set of a directed graph $H$ respectively, and
$w_H(u, v)$ or $w_H(e)$ denote the weight of edge $e = (u, v)$ in $H$.
Let $G$ be the input directed graph of $n$ vertices and $m$ edges with integer edge weights within $[1, W]$. We assume
edge weights are polynomially bounded, i.e., $W = \mathrm{poly}(n)$. Since we only consider approximate shortest paths, 
the assumption of integer edge weights is not essential, because real weights can be rounded with an appropriate precision.

For a vertex set $U \subseteq V(H)$ of a graph $H$, $H - U$ denotes the graph obtained by removing the vertices in $U$ and the edges incident to them from $H$. Similarly, for an edge set $U \subseteq E(H)$, $H- U$ denotes the graph obtained by removing the edges
in $U$. We often use the abbreviated notation $H - u$ when $U$ consists of a single element $u$. Let $N_H(u)$ be the set of 
the neighbors of $u \in V(H)$ in $H$.
For a vertex subset $U \subseteq V(H)$, let $\Iset(U)$ be the set of the edges incident to a vertex $u \in U$. Note that $\Iset(U)$ contains the edges whose endpoints are both in $U$. Given a set $X$ 
of vertices or edges of $H$, we define $H[X]$ as the subgraph induced by $X$.

A sequence of vertices $A = \Path{a_0, a_1, \dots, a_{k-1}}$ such that $(a_i, a_{i+1}) \in E(H)$ $(0 \leq i < k-1)$ is called a path from $a_0$ to $a_{k-1}$ in $H$. We use the terminology ``path'' for a simple path (i.e., $a_i$ for $0 \leq i \leq k-1$ are all different). For simplicity, we often deal with a path $A$ as the path subgraph corresponding to $A$.
We define $A[i,j]$ as its sub-path $\Path{a_i, a_{i+1}, \dots, a_j}$, and $w(A)$ as the weighted length of $A$.
The (directed) distance 
$\Dist_H(u, v)$ from $u$ to $v$ in $H$ is the length of the $u$-$v$ shortest path in $H$.
For any set $\Sigma$ of paths, we define $\Minlength(\Sigma)$ as $\Minlength(\Sigma) = \min\{w(Q) \mid Q \in \Sigma\}$. 
We also define $\Minpath(\Sigma)$ as the path $Q \in \Sigma$ of length $\Minlength(\Sigma)$. If two or more paths have
length $\Minlength(\Sigma)$, an arbitrary one of them is chosen as a canonical path.

The problem considered in this paper is defined as follows.
\begin{definition}[$(1+\eps)$-VSDO]
    A $(1+\eps)$-VSDO for a directed graph $G$ with positive weights, a single source $s \in V(G)$, and a positive constant $\eps \in (0, 1]$ is the data structure supporting query $\Query(v_f, x)$ for any failed vertex $v_f \in V(G)\setminus \{s\}$ and destination $t \in V(G)$ which returns a value satisfying $\Dist_{G-v_f}(s, t) \leq \Query(v_f, x) \leq (1+\eps) \cdot \Dist_{G-v_f}(s, t)$.
\end{definition}

\section{Technical Outline} \label{sec:outline}

In our construction, we recursively construct the oracle based on the \emph{centroid bipartition}, which is the approach also 
adopted in~\cite{GV12} and \cite{CM20}. Given a tree $T'$ of $n$ vertices, a \emph{centroid} $z \in V(T')$ of $T'$ is the vertex 
in $V(T')$ such that any connected component of $T -z$ contains at most 
$n/2$ vertices. Using the centroid $z$, one can split $T'$ into two edge-disjoint connected subtrees $T'_1$ and $T'_2$ of $T'$ such that $E(T'_1) \cup E(T'_2) = E(T')$, $V(T'_1) \cap V(T'_2) = \{z\}$, and $\frac{n}{3} \leq |V(T'_1)|, |V(T'_2)| \leq \frac{2n}{3}$, which we call the \emph{centroid bipartition} of $T'$. If $T'$ is a rooted tree,
we treat both $T'_1$ and $T'_2$ also as rooted trees. The split tree containing the original root $s$ is referred to as $T'_1$,
whose root is $s$, and the other one is referred to as $T'_2$, whose root is $z'$.
It is easy to find the centroid bipartition $T'_1$ and $T'_2$ from $T'$ in $O(n)$ time.
We also denote the $s$-$z$ path in $T'$ by $P_{T'}$.

Given the input graph $G$ and a source vertex $s$, our algorithm first constructs a shortest path tree $T$ rooted by $s$, and its centroid bipartition $T_1$ and $T_2$.
The whole oracle for $G$ and $s$ consists of three sub-oracles. Given a query 
$(x, f) \in (V(G) \setminus \{s\}) \times V(G)$, they respectively handle the 
three cases below (see Fig. \ref{divided case}).  
\begin{enumerate}
    \item $x \in V(P_T)$ and  $t \in V(T_2) \setminus \{z\}$
    \item $x \in V(T_2) \setminus \{z\}$ and $t \in V(T_2) \setminus \{z\}$
    \item $x \in V(T_1)$ and $t \in V(T_1)$
\end{enumerate}
Notice that, in the first case, we use $x \in V(P_T)$ instead of $x \in V(T_1)$; this is only the non-trivial situation because $\Dist_G(s, t) = \Dist_{G - x}(s, t)$ obviously holds if $x \not\in V(P_T)$. The sub-oracles for the second and third cases are recursively constructed for the graphs $G_2$ and $G_1$
obtained by slightly modifying $G[V(T_2)]$ and $G[V(T_1)]$. 
Since the modification does not increase the sizes of $G_1$ and $G_2$ so much from $G[V(T_1)]$ and $G[V(T_2)]$, 
one can guarantee that the recursion depth is logarithmic. Hence the total construction time
is easily bounded by $\tilde{O}(m)$ if the times of constructing the first-case sub-oracle, $G_1$, and $G_2$ are all $\tilde{O}(m)$. 

\begin{figure*}[tb]
    \centering
    \begin{minipage}[b]{0.3\linewidth}
        \centering
        \includegraphics[keepaspectratio, scale=0.2]{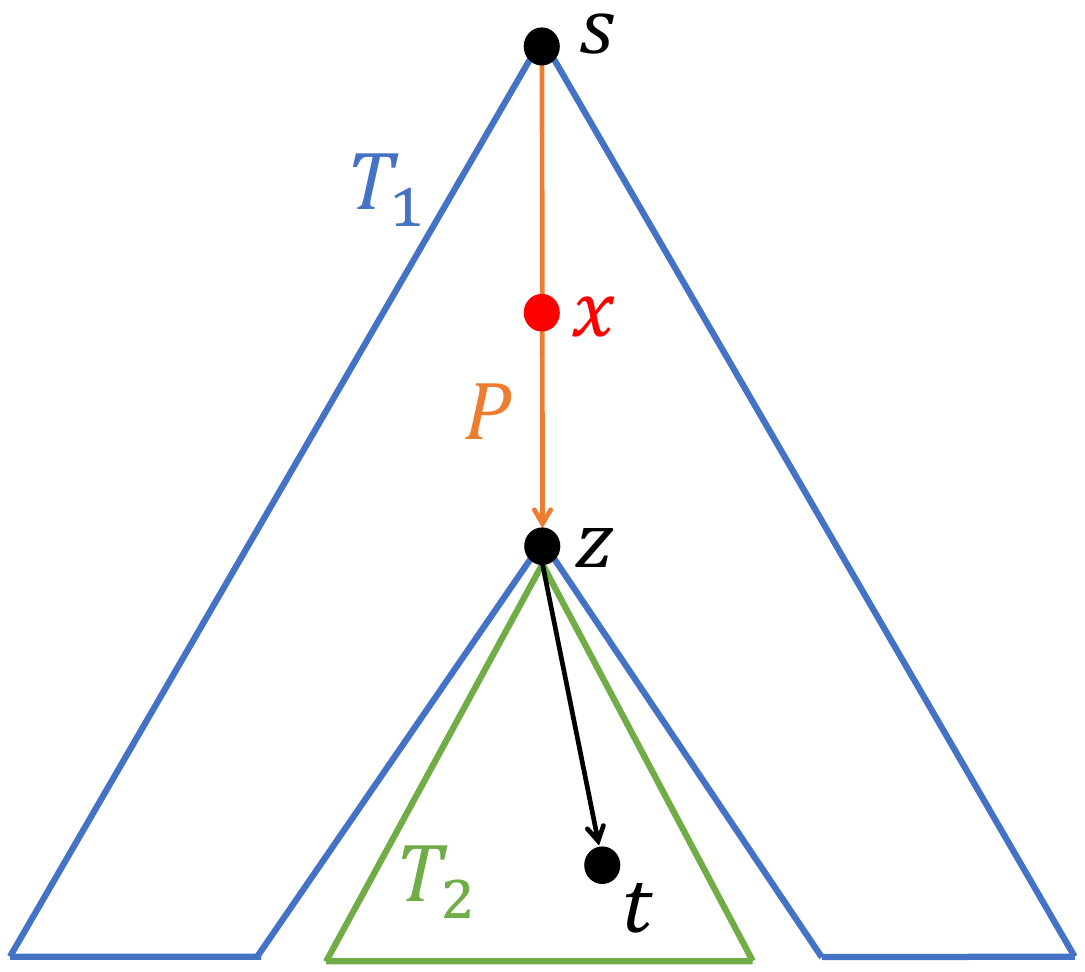}
        \subcaption{Case1}
    \end{minipage}
    \begin{minipage}[b]{0.3\linewidth}
        \centering
        \includegraphics[keepaspectratio, scale=0.2]{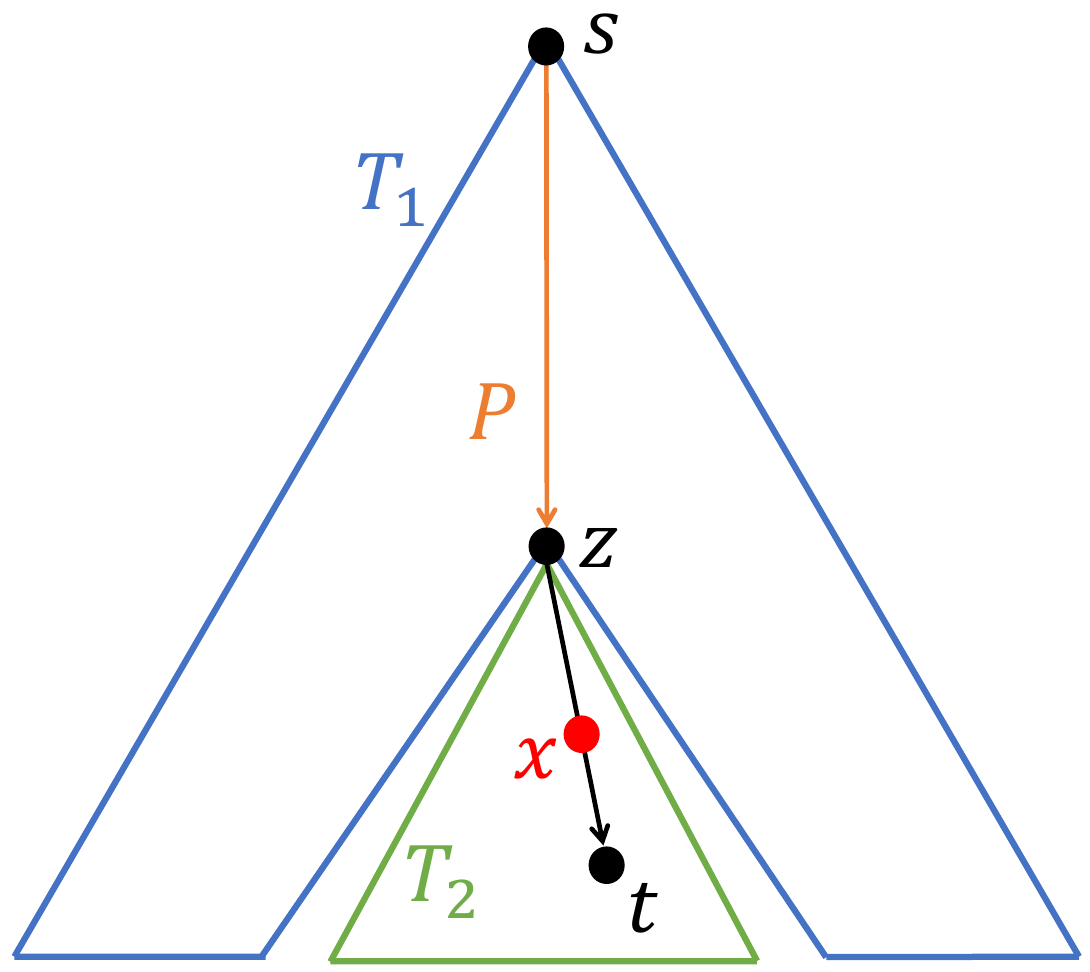}
        \subcaption{Case2}
    \end{minipage}
    \begin{minipage}[b]{0.3\linewidth}
        \centering
        \includegraphics[keepaspectratio, scale=0.2]{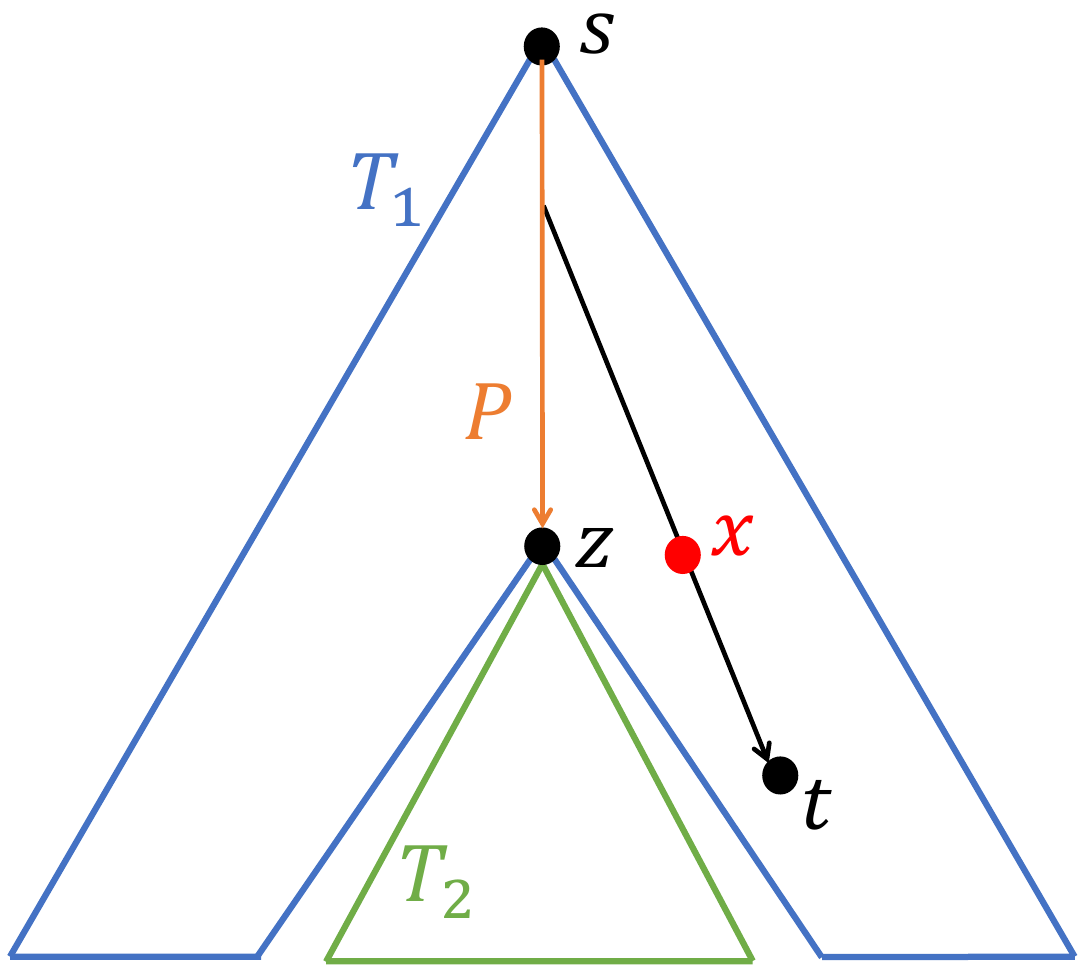}
        \subcaption{Case3}
    \end{minipage}
    
    \caption{Three cases in the oracle construction}
    \label{divided case}
\end{figure*}

\subsection{Sub-oracle for the First Case} \label{subsec:DP}

The precise goal for the first case is to develop a \emph{$Q$-faulty $(1 + \eps_1)$-VSDO}, which is a $(1 + \eps_1)$-approximate VSDO only supporting the faults on a given path $Q$, which is formally defined as follows:
\begin{definition}[$Q$-faulty $(1+\eps_1)$-VSDO] \label{p-faulty oracle def}
A $Q$-faulty $(1 + \eps_1)$-VSDO for a directed graph $G$ with positive weights, a source vertex $s \in V(G)$, and a path $Q$ from $s$ is the data structure supporting the query $\Fquery_Q(x, t)$ for any $x \in V(Q)$ and $t \in V(G)$ which returns a value satisfying $\Dist_{G-x}(s, t) \leq \Fquery_Q(x, t) \leq (1+\eps_1) \cdot \Dist_{G-x}(s, t)$.
\end{definition}
Obviously, the $P_T$-faulty $(1 + \eps_1)$-VSDO for $G$ is the sub-oracle covering the first case, where we need to set $\eps_1 = \eps/(3\log n)$ to deal with the accumulation of approximation error in the recursive construction (this point is explained later in more details). Let $P_T = \Path{v_0, v_1, \dots, v_{p-1}}$ $(s=v_0, z=v_{p-1})$, 
and $v_f$ be the alias of the faulty vertex $x \in V(P_T)$. The construction of 
the $P_T$-faulty $(1 + \eps_1)$-VSDO follows 
a generalized and simplified version of the approximate $s$-$t$ replacement path 
algorithm by Bernstein~\cite{Ber10}. We first introduce two types for $s$-$t$ paths 
in $G - v_f$ (see also Fig.~\ref{ex-dj}).
\begin{definition}(Jumping and departing paths) \label{dj def}
    For any failed vertex $v_f \in V(P_T) \setminus \{s\}$ and destination $t \in V(G)$, a $s$-$t$ path $R$ in $G - v_f$ is called a \emph{departing path} avoiding $v_f$ if it satisfies the following two conditions (C1) and (C2):
    \begin{itemize}
    \item (C1) The vertex subset $V(R) \cap V(P_T[0, f-1])$ induces a prefix of $R$.
    \item (C2) $(V(R) \setminus \{t\}) \cap V(P_T[f+1, p-1]) = \emptyset$.
    \end{itemize}
    A $s$-$t$ path $R$ in $G - v_f$ is called a \emph{jumping path} avoiding $v_f$ if it satisfies the condition (C1) and (C3) below:
    \begin{itemize}
        \item (C3) $(V(R) \setminus \{t\}) \cap V(P_T[f+1, p-1]) \neq \emptyset$.
    \end{itemize}
\end{definition}
We call the last vertex of the prefix induced by $V(R) \cap V(P_T[0, f-1])$ the \emph{branching vertex} of $R$, which we refer to as $b(R)$. In addition, 
for any jumping path $R$ avoiding $v_f$, we define $c(R)$ as the first vertex in $V(R) \cap V(P_T[f+1, p-1])$ 
(with respect to the order of $R$), which we refer to as the \emph{coalescing vertex} of $R$. 
Given any failed vertex $v_f \in V(P_T) \setminus \{s\}$ and destination $t \in V(G)$, $\DDist_{G - v_f}(s, t)$ and 
$\JDist_{G - v_f}(s, t)$ respectively denote the lengths of the shortest departing and jumping paths from $s$ to $t$ avoiding 
$v_f$.  Since any path satisfies either (C2) and (C3), and one can assume that any shortest $s$-$t$ path in $G - v_f$ satisfies 
(C1) without loss of generality, $\Dist_{G - v_f}(s, t) = \min\{\DDist_{G - v_f}(s, t), \JDist_{G - v_f}(s, t)\}$ holds. Hence it suffices to construct two data structures approximately answering the values of $\DDist_{G - v_f}(s, t)$ and $\JDist_{G - v_f}(s, t)$ for any $v_f \in V(P_T)$ and $t \in V(T_2)$ respectively. For $\DDist_{G - v_f}(s, t)$,
we realize it as an independent sub-sub-oracle:

\begin{definition}[DP-Oracle] \label{dp-oralce def}
A \emph{$Q$-faulty $(1+\eps_1)$-DPO (DP-oracle)} for a graph $G$, a source vertex $s \in V(G)$, and a path $Q$ from $s$ 
is the data structure supporting the query $\Dquery(x, t)$
for any $x \in V(Q)$ and $t \in V(G)$ which returns a value 
satisfying $\DDist_{G-x}(s, t) \leq \Dquery(x, t) \leq (1+\eps_1) \cdot \DDist_{G-x}(s, t)$.
\end{definition}
It should be noted that our definition of $P_T$-faulty $(1 + \eps_1)$-DPOs supports queries for all $t \in V(G)$, not limited to $V(T_2)$. This property is not necessary for addressing the first case, but used in other cases.

\begin{figure*}[tb]
    \centering
    \begin{minipage}[b]{0.3\linewidth}
        \centering
        \includegraphics[keepaspectratio, scale=0.2]{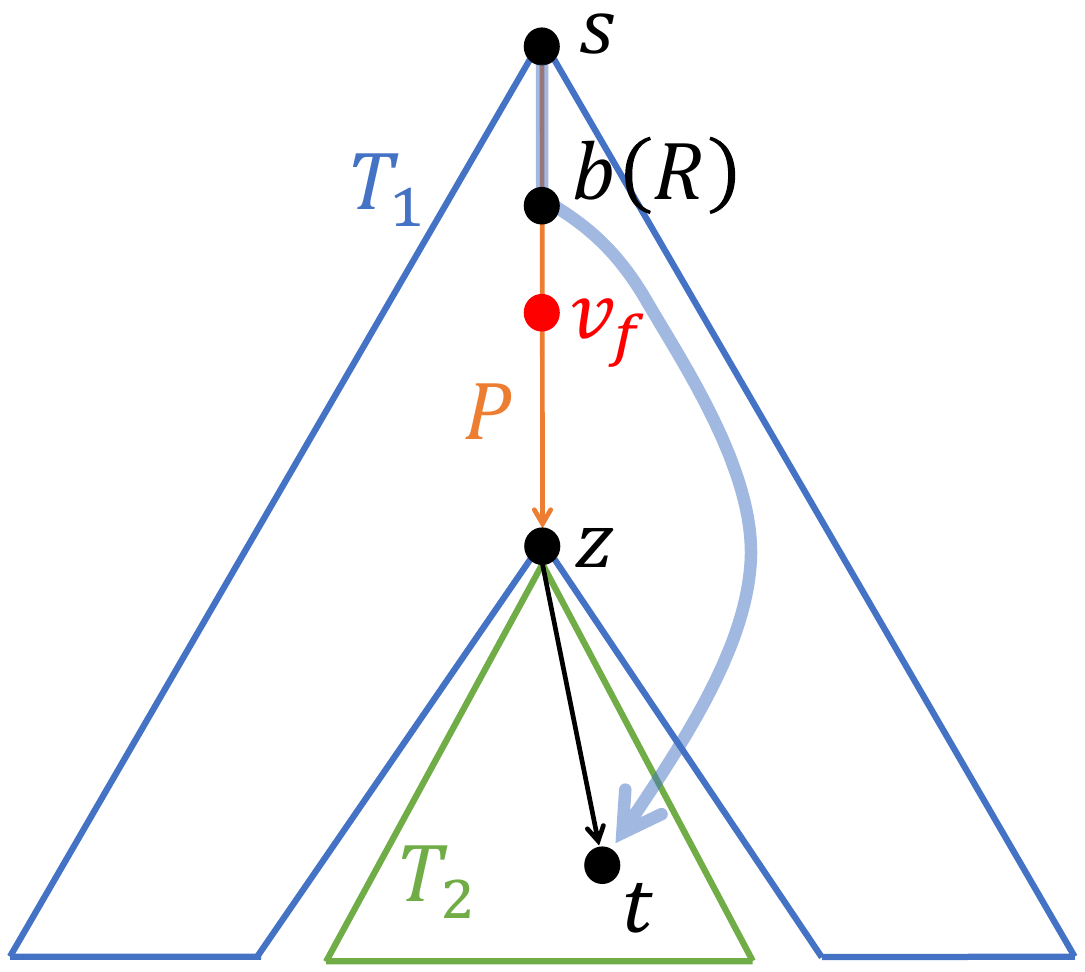}
        \subcaption{departing path}
    \end{minipage}
    \begin{minipage}[b]{0.3\linewidth}
        \centering
        \includegraphics[keepaspectratio, scale=0.2]{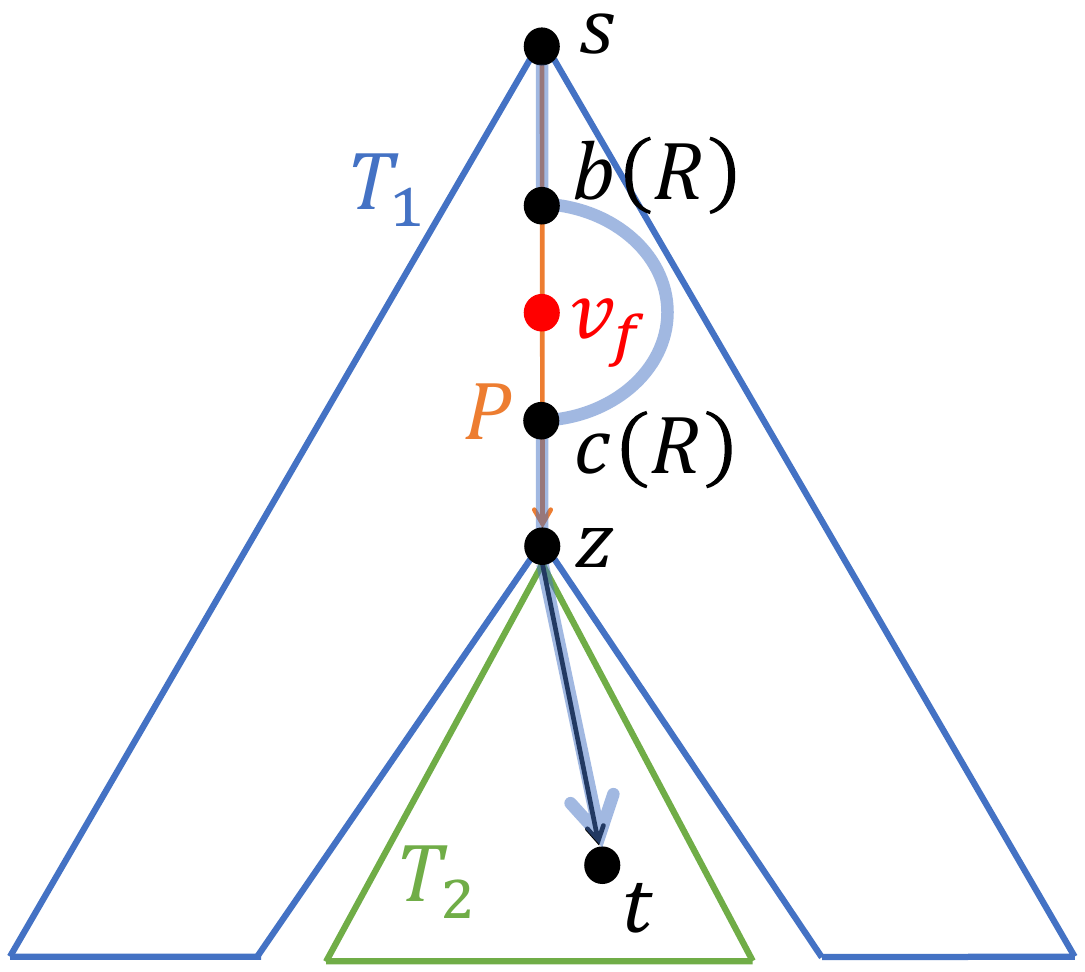}
        \subcaption{jumping path}
    \end{minipage}
    \caption{Examples of departing and jumping paths. $b(R)$ represents the branching vertex on each path, and $c(R)$ represents a coalescing vertex in (b).}
    \label{ex-dj}
\end{figure*}

The key technical ingredient of implementing DP-oracles is the \emph{progressive Dijkstra} algorithm presented in~\cite{Ber10}. We first introduce several notations and terminologies for explaining it: 
since $V(R) \cap V(P_T[0, f -1]) = V(R) \cap V(P_T)$ necessarily holds 
for any departing path avoiding $v_f$, the branching vertex $b(R)$ is the last vertex of the path induced by $V(R) \cap V(P_T)$, i.e., $b(R)$ is determined  
only by $R$, independently of $v_f$. It allows us to treat departing paths without association of avoiding vertex $v_f$. That is, the sentence ``a departing path $R$ with branching vertex $b(R)$'' means that $R$ is a departing path avoiding some successor of $b(R)$ (to which we do not pay attention). 
Assume that $p = |V(P_T)|$ is a power of $2$ for simplicity. Given $0 \leq i \leq \log p$ and $0 \leq j < 2^i$, we define $I^i_j$ as $I^i_j = 
P_T[p \cdot j/2^{i}, p \cdot (j + 1)/2^{i} - 1]$.
Intuitively, for each $i$, $I^i_0, I^i_1, \dots I^i_{2^i -1}$ form a partition of $P_T$ into $2^i$ sub-paths of length $p/2^i$ (see Fig.~\ref{partition}). For any sub-path $I$ of $P_T$, let $\Gamma(I, t)$ be the set of all $s$-$t$ departing paths $R$ such that 
$b(R) \in V(I)$ holds. Roughly, the progressive Dijkstra provides a weaker form of the approximate values of 
$\Minlength(\Gamma(I^i_j, t))$ for all $i$ and $j$ in $\tilde{O}(m / \eps_1)$ time, where ``weaker form'' means that it computes $\Minlength(\Gamma(I^i_j, t))$ such that $(1 + O(\eps_1 / \log n)) \cdot \Minlength(\Gamma(I^i_j, t)) < \min_{0 \leq j' < j} \Minlength(\Gamma(I^i_{j'}, t))$ holds. The intuition of this condition is that the progressive Dijkstra computes $\Minlength(\Gamma(I^i_j, t))$ only when it is not well-approximated by the length of the best departing path already found.
Letting $\mathcal{I}$ be the set of $I^i_j$ for all $i$ and $j$, for any failed vertex $v_f \in V(P_T)$, 
the prefix $P_T[0, f-1]$ is covered by a set $C \subseteq \mathcal{I}$ of at most $O(\log p)$ sub-paths in $\mathcal{I}$ (see the grey intervals of Fig.~\ref{partition}). Since the branching vertex of the shortest $s$-$t$ departing path avoiding $v_f$ obviously 
lies on $P_T[0, f-1]$, $\DDist_{G - v_f}(s, t) = \min_{I \in C} \Minlength(\Gamma(I, t))$ holds.
By a careful analysis, it is proved that this equality provides a $(1 + \eps_1)$-approximation 
of $\DDist_{G - v_f}(s, t)$ by using the outputs of the progressive Dijkstra as approximation 
of $\Minlength(\Gamma(I, t))$\footnote{The equality for computing $\DDist_{G - v_f}(s, t)$ actually used in the DP-oracle is further 
optimized using some additional properties of the progressive Dijkstra, which avoids 
the explicit construction of $C$ (see Lemma~\ref{lma:dporacle} for the details).}.

We explain how jumping paths are handled.
Let $R$ be the shortest $s$-$t$ jumping path avoiding $v_f \in V(P_T)$. Since the path from the coalescing vertex $c(b)$ to $t$ in $T$ (of length $\Dist_G(c(b), t)$) is not disconnected by $v_f$, one can assume that the suffix of $R$ from $c(b)$ is the path in 
$T$ without loss of generality. Then the suffix necessarily contains $z$, and thus it seems that 
$\JDist_{G - v_f}(s, t) = \Dist_{G - v_f}(s, z) + \Dist_T(z,t)$ holds.
Unfortunately, it does not always hold, because the shortest $s$-$z$ replacement path avoiding $v_f$ and the shortest $z$-$t$ path in 
$T$ might intersect. However, in such a case, $\DDist_{G - v_f}(s, t)  < \JDist_{G - v_f}(s, t)$ necessarily holds and thus the value 
from the DP-oracle well approximates $\Dist_{G - v_f}(s, t)$. Hence it is safe to use the above equality in our case.
For computing the approximate values of the right side of the equality for any $v_f \in V(P_T)$ and $t \in V(T_2) \setminus \{z\}$, it suffices to store the values $\Dist_{G - v_f}(s, z)$ for all $v_f \in V(P_T)$ approximately. We compute those values
using the Bernstein's algorithm~\cite{Ber10} which runs in $\tilde{O}(m/\eps_1)$ time.

\begin{figure*}[t]
    \centering
    \includegraphics[keepaspectratio, scale=0.3]{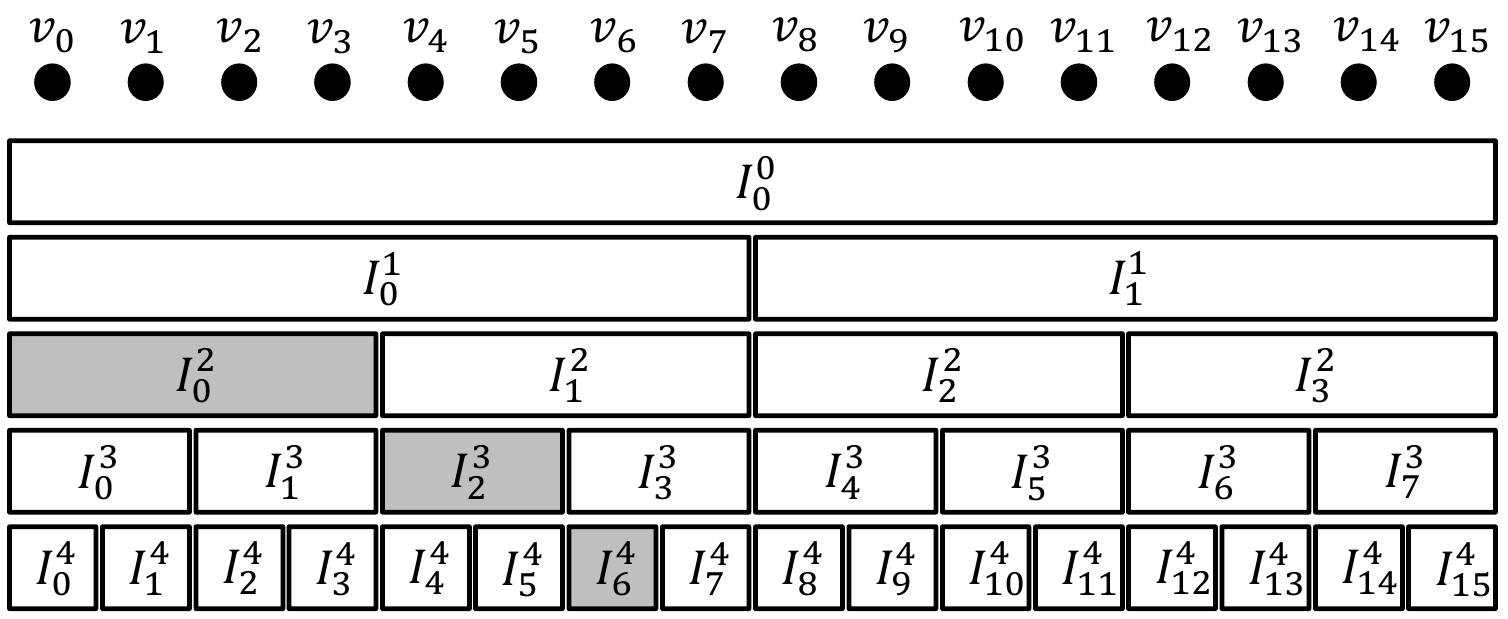}
    \caption{Example of partition of $P_T$ ($p=16$). The set of grey intervals is an example of $C$, for vertex $v_7$. }
    \label{partition}
\end{figure*}

\subsection{Construction of \texorpdfstring{$G_2$}{G2}}
The construction of $G_2$ for case 2 is relatively simple. Let $R$ be the shortest $s$-$t$ replacement path from $s$ to $t \in V(T_2)$ 
avoiding a failed vertex $x \in V(T_2)$. A key observation is that one can assume that 
$R$ does not contain any backward edge from $V(T_2) \setminus \{z\}$ to $V(T_1)$: if $R$ goes back from $V(T_2)$ to $V(T_1)$ 
through an edge 
$(u, v)$, one can replace the prefix of $R$ up to $v$ by the shortest path from $s$ to $v$ in $T_1$ (See Fig. \ref{natural-rp}), which never increases the length of $R$ because $s$-$v$ path in $T_1$ is the shortest path in $G$. 
Notice that the $s$-$v$ path in $T_1$ does not contain the failed vertex since $x\in V(T_2)\setminus \{z\}$.
This fact implies that $R$ consists of the prefix contained in $T_1$ and the suffix contained in $G_2$ which are concatenated by a forward edge from $V(T_1)$ to 
$V(T_2) \setminus \{z\}$. This observation naturally yields our construction of $G_2$, where all the vertices in 
$V(T_1) \setminus \{s\}$ are deleted, and $s$ and each vertex $u \in V(T_2) \setminus \{z\}$ is connected by edge $(s, u)$ of weight $w_{G_2}(s, u) = \Dist_{G[\Iset(V(T_1))]}(z, u)$ (See Fig. \ref{construction2}). 
The weights are computed just by running Dijkstra, which takes $\tilde{O}(m)$ time. Intuitively, the added edge $(s, u)$
corresponds to the prefix of $R$ up to the first vertex in $V(R) \cap V(T_2)$. 
It should be noted that the edges from $s$ are \emph{safely usable}, that is, the paths corresponding to the added edges do not contain the failed vertex $x$ as stated above. The precise construction of $G_2$ follow the procedure below:
\begin{enumerate}
    \item $G_2 \gets G[V(T_2) \setminus \{z\}]$
    \item Add $s$ to $G_2$.
    \item Calculate the distance $\Dist_{G[\Iset(V(T_1))]}(s, u)$ for each vertex $u \in V(T_2) \setminus \{z \}$ 
    by running the standard Dijkstra in $G[\Iset(V(T_1))]$.
    \item Add the edge $(s, u)$ of weights $\Dist_{G[\Iset(V(T_1))]}(s, u)$ to $G_2$ for each vertex $u \in V(T_2) \setminus \{z \}$ 
    satisfying $\Dist_{G[\Iset(V(T_1))]}(s, u) < \infty$.
\end{enumerate}
The sub-oracle for $G_2$ is a $(1+\eps_1)$-VSDO for $G_2$ and the newly added source $s$, which is constructed 
recursively. The correctness of the constructed sub-oracle relies on the following lemma.

\begin{lemma} \label{lma:G2-correctness}
    For any failed vertex $x \in V(T_2) \setminus \{z\}$ and destination $t \in V(T_2) \setminus \{z\}$, 
    $\Dist_{G_2-x}(s, t) = \Dist_{G-x}(s, t)$ holds.
\end{lemma}

\begin{figure*}[tb]
    \centering
    \begin{minipage}[b]{0.45\linewidth}
        \centering
        \includegraphics[keepaspectratio, scale=0.2]{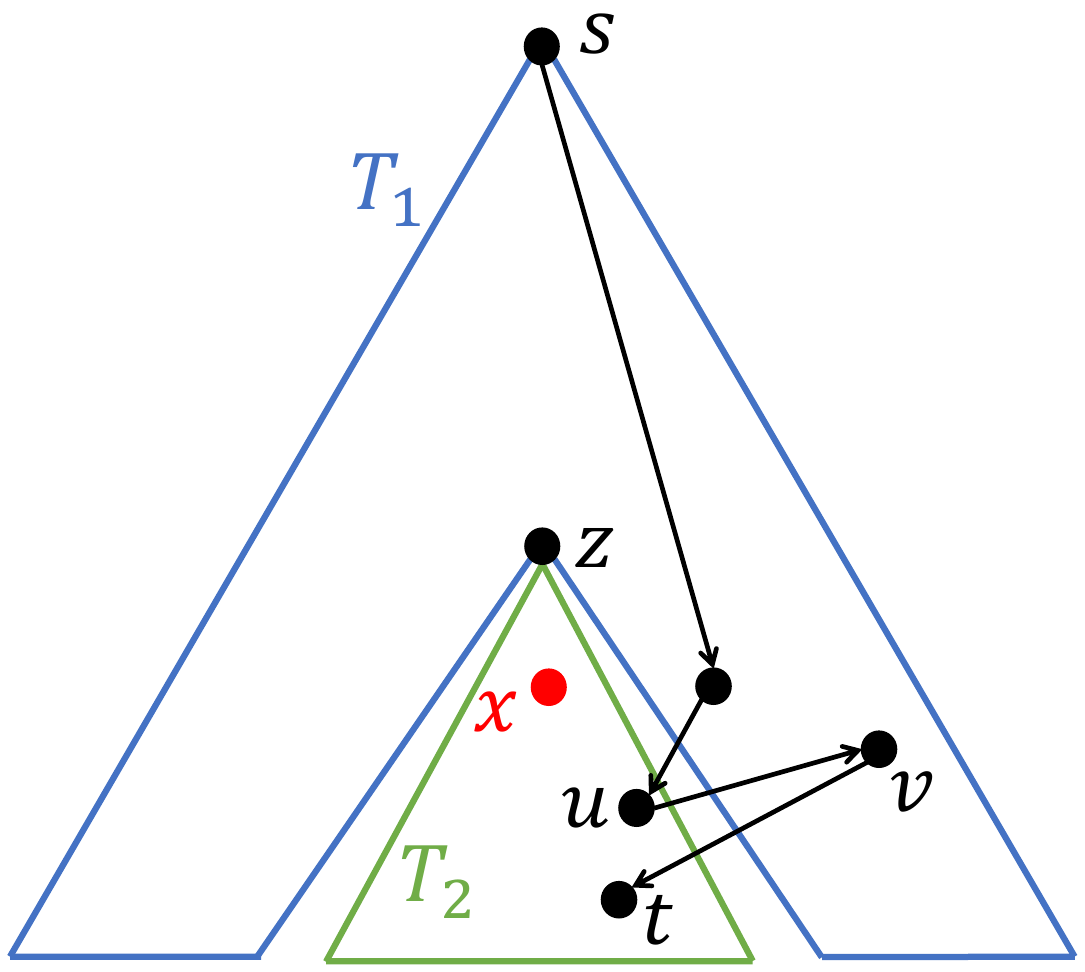}
        \subcaption{The original $R$ avoiding $x$}
    \end{minipage}
    \begin{minipage}[b]{0.45\linewidth}
        \centering
        \includegraphics[keepaspectratio, scale=0.2]{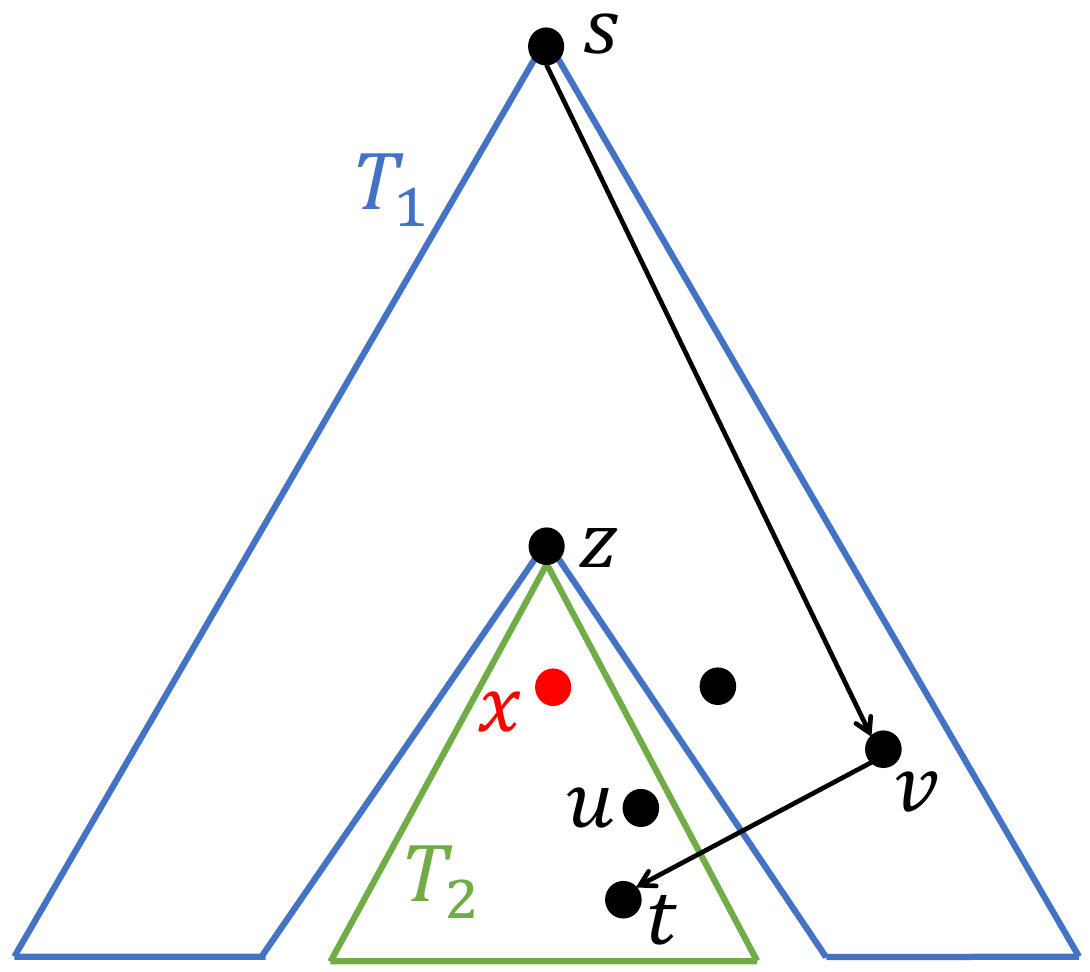}
        \subcaption{A shorter path avoiding $x$}
    \end{minipage}
    
    \caption{An Example of the Case that $E(R)$ contains a backward edge from $V(T_2)$ to $V(T_1)$.}
    \label{natural-rp}
\end{figure*}

\begin{figure*}[tb]
    \centering
    \begin{minipage}[b]{0.45\linewidth}
        \centering
        \includegraphics[keepaspectratio, scale=0.2]{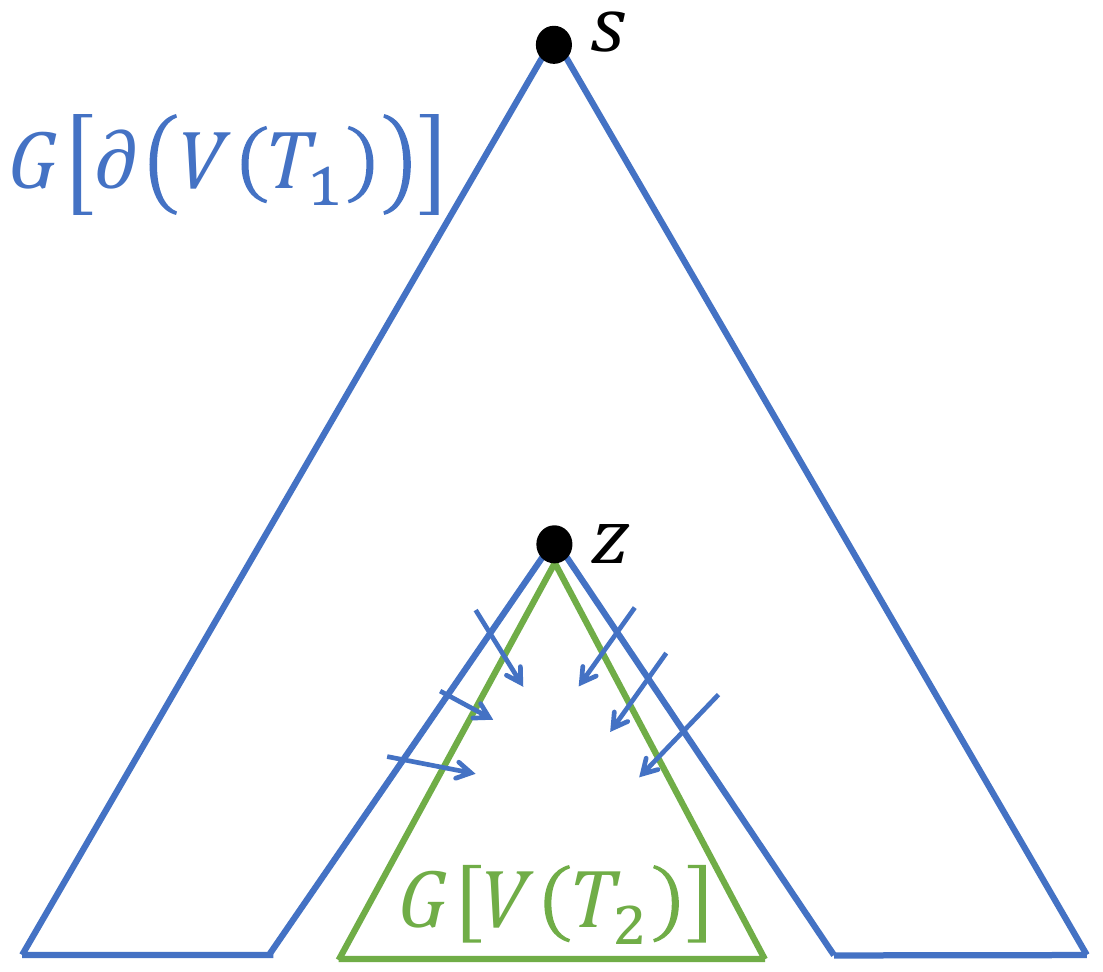}
        \subcaption{Graph $G[\Iset(V(T_2))]$ (blue subgraph)}
    \end{minipage}
    \begin{minipage}[b]{0.45\linewidth}
        \centering
        \includegraphics[keepaspectratio, scale=0.2]{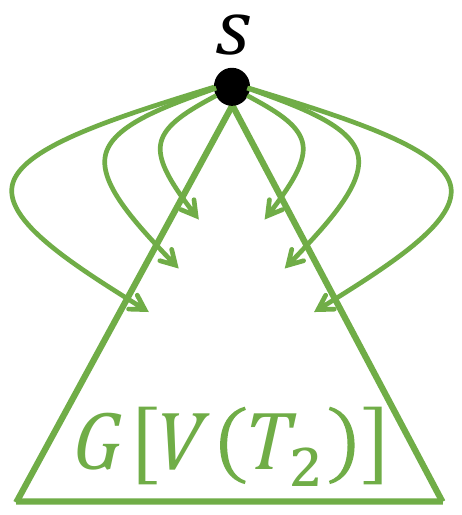}
        \subcaption{Graph $G_2$}
    \end{minipage}
    
    \caption{Graph $G[\Iset(V(T_2))]$ and the construction of $G_2$}
    \label{construction2}
\end{figure*}

\subsection{Construction of \texorpdfstring{$G_1$}{G1}}
\label{subsec:constG1}
We show the construction of $G_1$ for case 3. 
If the shortest $s$-$t$ replacement path avoiding $x$ does not cross 
$\overline{G}_1 = G[V(T_2) \setminus \{z\}]$, the recursive oracle for $G[V(T_1)]$ well approximates its length. 
However, paths crossing $\overline{G}_1$ are obviously omitted in such a construction. To handle
those omitted paths, we augment $G[V(T_1)]$ with an edge set $F$, i.e $G_1 = G[V(T_1)] + F$. Let $R$ be the shortest $s$-$t$ 
replacement path avoiding $x$, and assume that $R$ crosses $\overline{G}_1$. The more precise goal is to construct $F$ 
such that, for any sub-path(s) $R'$ of $R$ whose internal vertices are all in $V(\overline{G}_1)$, there exists a corresponding edge $e \in F$ 
from the first vertex of $R'$ to the last vertex of $R'$ with a weight approximately equal to $w(R')$. 
In our construction, the set $F$ is the union of the edge sets $F_1$ and $F_2$ provided 
by two augmentation schemes respectively addressing the two sub-cases of $x \not\in V(P_T)$ and $x \in V(P_T)$. 

Consider the first scheme which constructs $F_1$ coping with the sub-case of $x \not\in V(P_T)$. 
We denote by $(u, w)$ the last backward edge from $\overline{G}_1$ to $G[V(T_1)]$ in $R$ (with respect to the order of $R$).
One can assume that the sub-path of $R$ from $s$ to $u$ is the path of $T$ without loss of generality (recall that this 
sub-path does not contain $x$ by the assumption of $x \in V(T_1)$ and $x \not\in V(P_T)$). Then $R$ necessarily contains 
$z$. In addition, the length of the sub-path from $z$ to $w$ is equal to $\Dist_{G[\Iset(V(T_2))]}(z, w)$.
Hence this case is handled by adding to $G[V(T_1)]$ the edges $(z, u)$ for each 
$u \in V(T_1)$ of weight $w(z, u) = \Dist_{G[\Iset(V(T_2) \setminus \{z\})]}(z, u)$ (See Fig. \ref{construction1}).  
The weights of added edges are computed just by running Dijkstra with source $z$ in $G[\Iset(V(T_2))]$,
which takes $\tilde{O}(m)$ time. Since we call the recursive oracle for $G_1$ only when the 
failed vertex $x$ lies in $G_1$, the paths corresponding to the edges in $F_1$ are safely usable.

\begin{figure*}[tb]
    \begin{minipage}[b]{0.47\linewidth}
        \centering
        \includegraphics[keepaspectratio, scale=0.2]{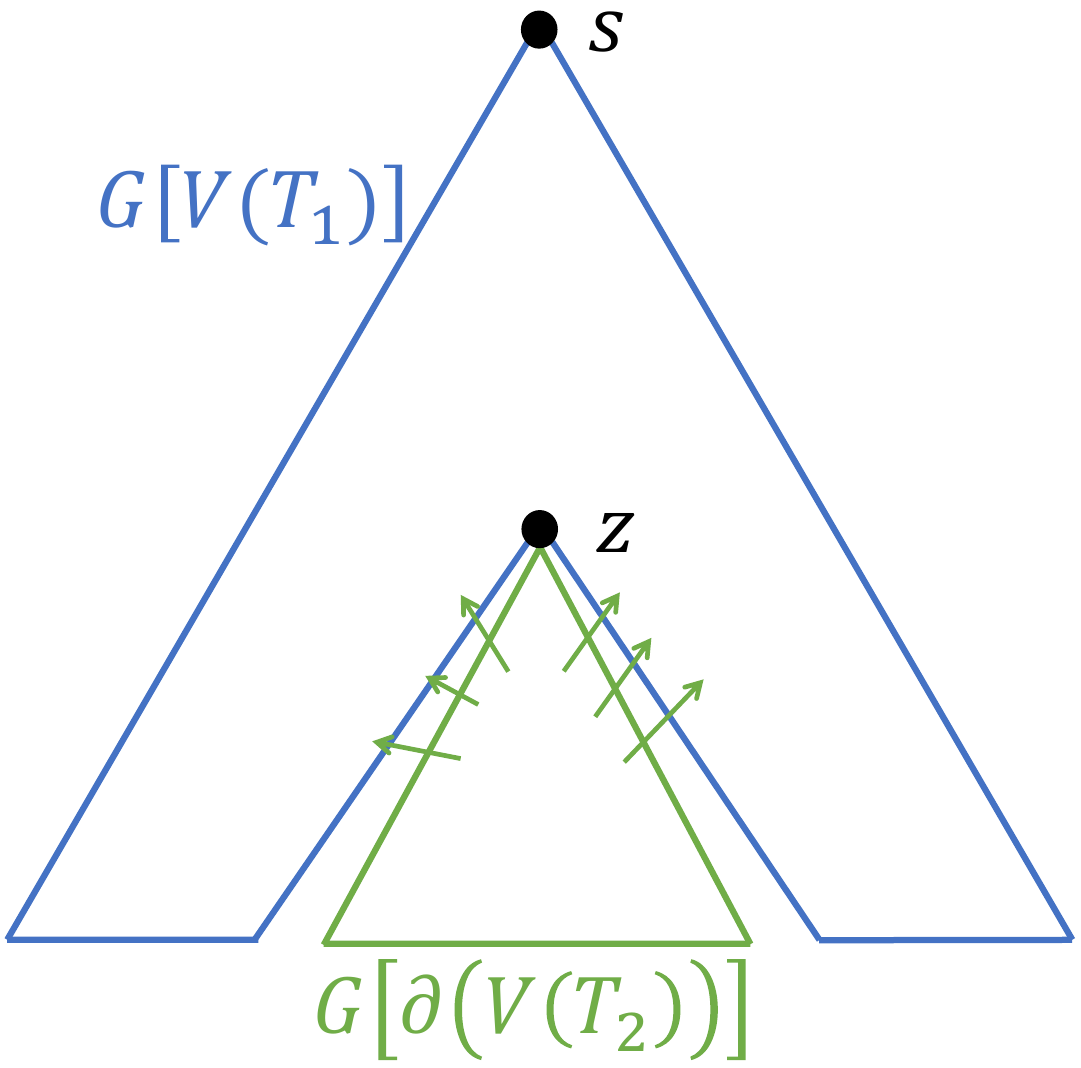}
        \subcaption{Graph $G[\Iset(V(T_2) \setminus \{z\})]$ (green subgraph)}
    \end{minipage}
    \begin{minipage}[b]{0.47\linewidth}
        \centering
        \includegraphics[keepaspectratio, scale=0.2]{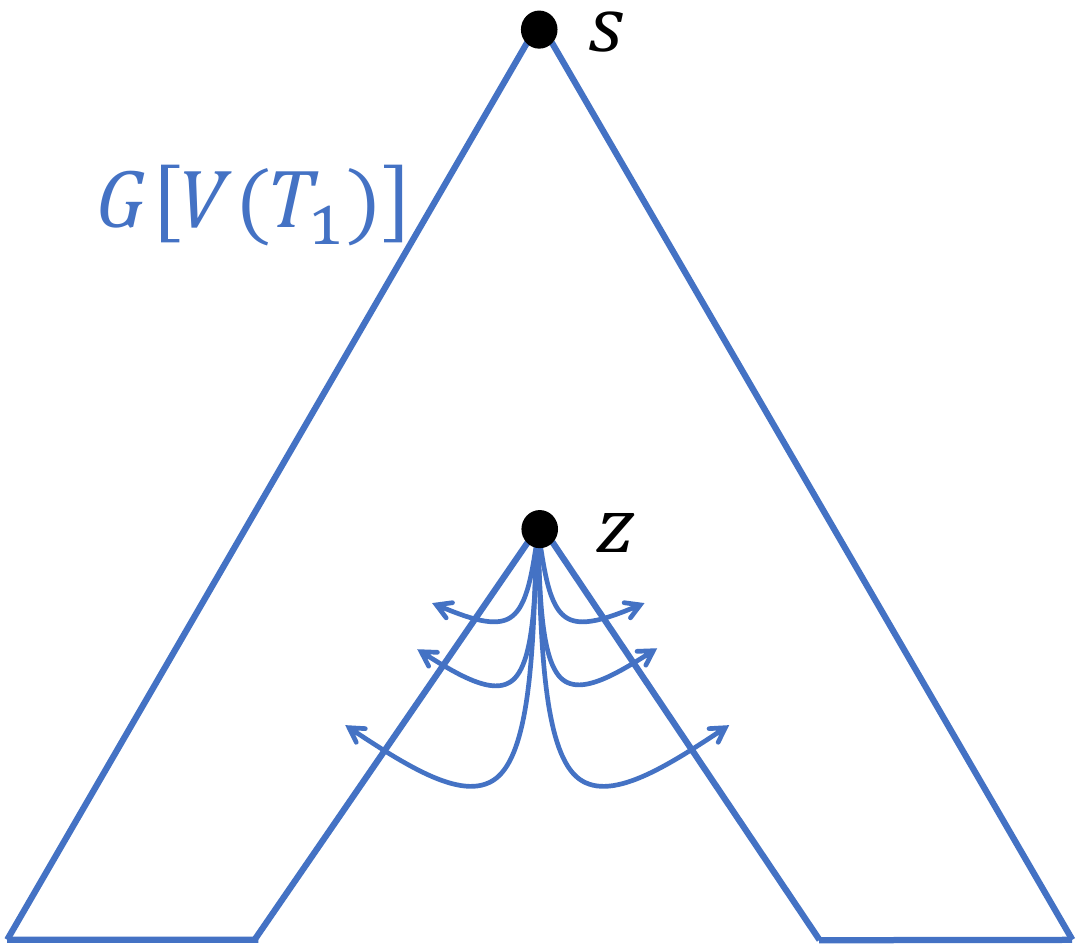}
        \subcaption{Graph $G[V(T_1)] + F_1$}
    \end{minipage}
    
    \caption{Construction of $F_1$}
    \label{construction1}
\end{figure*}

In the sub-case of $x(=v_f) \in V(P_T)$, we only have to consider the situation that $R$ is a jumping-path (i.e., $\JDist_{G - x}(s, t) < \DDist_{G - x}(s, t)$), because the DP-oracle 
constructed in the first case supports the query for any destination $t \in V(G)$, not only for $t \in V(T_2)$. 
Then there are the following two possibilities:
\begin{itemize}
\item The suffix from $c(R)$ contains a vertex in $V(\overline{G_1})$.
\item The prefix of $R$ up to $c(R)$ contains a vertex in $V(\overline{G_1})$.
\end{itemize}
Note that these two possibilities are not exclusive, and thus both can simultaneously happen.
The first possibility is handled by the augmentation with $F_1$. Let $(u, w)$ be the last backward edge from $\overline{G}_1$
to $G[V(T_1)]$ in $R$. Due to the case assumption of $\JDist_{G - x}(s, t) < \DDist_{G - x}(s, t)$, one can assume that the sub-path from 
$c(R)$ to $u$ is the path in $T$ containing $z$ (recall the argument of handling jumping paths in Section~\ref{subsec:DP}). 
The edge from $z$ to $w$ corresponding to the sub-path of $R$ from $z$ to $w$ has been already added as an edge in $F_1$. 
Hence it suffices to construct $F_2$ for coping with the second possibility. The baseline
idea is to add the edges corresponding to the sub-path of $R$ from $b(R)$ to $c(R)$. Unfortunately, a straightforward 
application of such an idea for all $R$ (over all possible combinations of $v_f$ and $t$) results in a too large cardinality of
$F_2$. Instead, we add only a few edges determined by the inside data structure of the DP-oracle which ``approximate'' the sub-paths from $b(R)$ to $c(R)$ for all possible $R$. The intuition behind this idea is the observation that the prefix of 
any jumping path $R$ up to $c(R)$ is a shortest departing path from $s$ to $c(R)$ with branching vertex $b(R)$. That is,
the $b(R)$-$c(R)$ sub-path of $R$ is the suffix of $\Minpath(\Gamma(I^i_j, c(R)))$ from its branching vertex 
for some $i$ and $j$ satisfying $b(R) \in I^i_j$. It naturally deduces the construction of $F_2$ by adding 
the edges corresponding to the suffixes of the paths computed by the progressive Dijkstra, because those paths 
well approximate $\Minpath(\Gamma(I^i_j, v))$ for all possible triples $(i, j, v)$.

The safe usability of the edges in $F_2$ is slightly delicate. Assuming the second sub-case of $x \in V(P_T)$,
any path $Q'$ corresponding to an edge in $F_2$ is certainly safe because $Q'$ is the suffix of a departing path $Q$ from 
$b(Q)$ and thus does not intersect $V(P_T)$. However, in the first sub-case of 
$x \not\in V(P_T)$, $Q'$ might be unavailable, 
although the corresponding edge in $F_2$ is still available. Then the recursive oracle for $G_1 = G[V(T_1)] + F$ can return an erroneous shorter length due to the existence of the edges in $F_2$ if the failed vertex $v_f$ does not lie in $P_T$. Fortunately, such an error never happens, because if $v_f$ does not lie on $P_T$, the whole of $P_T$ is available, and using the sub-path from $b(Q)$ to $c(Q)$ of $P_T$ always benefits more than using the added edge $(b(Q), c(Q))$. That is, one can assume that the shortest $s$-$t$ replacement path does not contain any edge in
$F_2$ without loss of generality if $v_f \not\in V(P_T)$ holds. 

In summary, the construction of $G_1$ follows the procedure below.

\begin{enumerate}
    \item $G_1 \gets G[V(T_1)]$.
    \item Calculate the distance $\Dist_{G[\Iset(V(T_2))]}(z, u)$ for each vertex $u \in V(T_1)$ by running the standard Dijkstra in $G[\Iset(V(T_2))]$.
    \item Add the edge $(z, u)$ of weights $\Dist_{G[\Iset(V(T_2))]}(z, u)$ to $G_1$ for each vertex $u \in V(T_1)$ satisfying $\Dist_{G[\Iset(V(T_2))]}(z, u) < \infty$.
    \item For each vertex $v_c \in V(P_T)$, $0 \leq i \leq \log p$ and $(\cdot, \ell, v_b) \in \mathsf{upd}(i, v_c)$, add the edge $(v_b, v_c)$ of weights $\ell - \Dist_T(s, v_b)$ to $G_1$.
\end{enumerate}

The sets of the edges added in step 3 and step 4 respectively correspond to $F_1$ and $F_2$.
If the constructed graph $G_1$ has parallel edges, only the one with the smallest weight is retained and the others are eliminated.
However, for simplicity of the argument, we assume that those parallel edges are left in $G_1$ without elimination.
The correctness of the sub-oracle recursively constructed for $G_1$ is guaranteed by the following lemma.

\begin{lemma} \label{lma:G1-correctness}
    For any failed vertex $x \in V(T_1)$ and destination $t \in V(T_1)$, the following conditions hold:
    \begin{itemize}
        \item If $x \not\in V(P_T)$, $\Dist_{G_1-x}(s, t) =  \Dist_{G-x}(s, t)$.
        \item If $x \in V(P_T)$ and $\JDist_{G -x}(s,t) < \DDist_{G - x}(s, t)$ holds, 
        $\Dist_{G - x}(s, t) \leq \Dist_{G_1-x}(s, t) \leq (1 + \eps_1) \cdot \Dist_{G -x}(s, t)$.
        \item Otherwise, $\Dist_{G-x}(s, t) \leq  \Dist_{G_1-x}(s, t)$.
    \end{itemize}
\end{lemma}
Notice that if neither of the first and second conditions are satisfied, the DP-oracle returns 
the correct approximate distance (recall the discussion in Section~\ref{subsec:constG1}). Then what the sub-oracle
for $G_1$ must avoid is to output an wrongly smaller value. It is ensured by the third condition.

Finally, we remark on the accumulation of approximation factors. Since the weights of edges in $F_2$ $(1 + \eps_1)$-approximate the length of the sub-paths from $b(R)$ to $c(R)$, the graph $G_1$ only guarantees that there exists a $s$-$t$ path avoiding $v_f$
whose length is $(1 + \eps_1)$-approximation of $w_G(R)$. Hence even if we construct an $(1 + \eps_1)$-VSDO for $G_1$,  
its approximation factor as a oracle for $G$ is $(1 + \eps_1)^2$. Due to $O(\log n)$ recursion depth, the approximation factor 
of our oracle finally obtained becomes $(1 + \eps_1)^{O(\log n)} = 1 + O(\eps_1 \log n)$, which is the reason why we need to set up 
$\eps_1 = \eps / (3 \log n)$.

\section{Whole Construction and Query Processing}
\label{sec:constAndQuery}

In this section, we summarize the whole construction of $(1 + \eps)$-VSDO, and present the details of the query processing.
As mentioned in Section~\ref{sec:outline}, we recursively construct the sub-oracles for $G_1$ and $G_2$ (for their new 
sources $s$).
The recursion terminates if the number of vertices in the graph becomes at most 6. At the bottom level, the query is processed
by running the standard Dijkstra, which takes $O(1)$ time. Let $G^1$ be the input graph, and $G^{2i}$ and $G^{2i+1}$ be the graphs 
$G_1$ and $G_2$ for $G^i$. We denote the shortest path tree of $G^i$ by $T_i$, its centroid by $z_i$, and the path from $s$ to $z_i$ in
$T_i$ by $P_i$. We introduce the \emph{recursion tree}, where each vertex is a graph $G^i$, and $G^{2i}$ and $G^{2i+1}$ are
the children of $G^i$ (if they exist). We define $\mathcal{G}_h = \{G^i \mid 2^h \leq i \leq 2^{h+1} - 1\}$, i.e., 
$\mathcal{G}_h$ is the set of $G^i$ with depth $h$ in the recursion tree. Since the centroid bipartition split $T_i$ into two edge disjoint 
subtrees of size at most $2n/3$, the recursion depth is bounded by $O(\log n)$, and the total number of recursive calls is $O(n)$. 
Since one recursive call duplicates the centroid $z$ (into $G_1$ and $G_2$), the number of vertices can increase. But the 
total increase is bounded by $O(n)$, and thus we have $\sum_{G' \in \mathcal{G}_h} |V(G')| = O(n)$. On the increase of edges, the 
number of edges added to $G_1$ is bounded by the number of forward edges from $V(T_1)$ to $V(T_2) \setminus \{z\}$. The number of
edges in $F_1$ is also bounded by the number of backward edges from $V(T_2) \setminus \{z \}$ to $V(T_1)$. Since those forward/backward edges
are deleted by the partition, the actual increase is bounded by $|F_2|$, which satisfies $|F_2| = \tilde{O}(n/\eps_1)$ 
(see Lemma~\ref{lma:constDP}).
Hence we have $\sum_{G' \in \mathcal{G}_h} |E(G')| = \tilde{O}(m + n/\eps_1)$. It implies that the total 
time spent at each recursion level is $\tilde{O}(m / \eps_1 + n / \eps^2_1)$, and thus the total running time over 
all recursive calls is $\tilde{O}(m / \eps_1 + n / \eps^2_1)$. The size of 
the whole oracle is $\tilde{O}(n / \eps_1)$ because at each recursion level the data structure of size $\tilde{O}(n / \eps_1)$ 
is created.

The implementation of $\Query(x, t)$ is presented in Algorithm~\ref{query}, which basically follows the recursion approach
explained in Section~\ref{sec:outline}. We use notations $G^i.\Query$, $G^i.\Fquery$, and $G^i.\Dquery$ for clarifying 
the graph to which
the algorithm queries. Since the query processing traverses one downward path in the recursion tree
unless $x$ and $t$ is separated by the partition. If $x \in V(G^{2i})$ and $t \in V(G^{2i+1})$ happens, the case 1 of 
Section~\ref{sec:outline} (or the trivial case of $x \not\in V(P_{i})$) applies, and then no further recursion is invoked. 
The value returned as the answer of $G^i.\Query(x, t)$ is the minimum of all query responses.
It consists of at most $O(\log n)$
calls of $\Fquery$ and $\Dquery$. Since each call takes 
$O(\log n \cdot \log (\eps^{-1}\log (nW)))$ time (see Lemmas~\ref{lma:constDP} and \ref{lma:constPVSDO} for the details), 
the query time for $\Query(x, t)$ is $O(\log^2 n \cdot \log (\eps^{-1}\log (nW)))$. In summary, we have the following 
three lemmas.

\begin{lemma} \label{sizeOracle}
    The construction time of the $(1 + \eps)$-VSDO for $G$ is 
    $O(\eps^{-1} \log^4 n \cdot \log (nW) (m + n \eps^{-1} \cdot \log^3 n \cdot \log (nW)))$ 
    and the size of the constructed oracle is $O(\eps^{-1} n \log^3 n \cdot \log (nW))$.
\end{lemma}

\begin{lemma} \label{querytime}
    For any $x \in V(G) \setminus \{s \}$ and $t \in V(G)$, the running time of $G.\Query(x, t)$ is $O(\log^2 n \cdot \log (\eps^{-1}\log (nW)))$.
\end{lemma}

\begin{lemma} \label{lma:correctnessWholeoracle}
    For any $x \in V(G) \setminus \{s \}$ and $t \in V(G)$, $\Dist_{G-x}(s, t) \leq \Query(x, t) \leq (1+\eps) \cdot \Dist_{G-x}(s, t)$ holds.
\end{lemma}

The three lemmas above obviously deduce Theorem~\ref{thm:main}.

\begin{algorithm}[t]
    \caption{$G^i$.\textbf{Query$(x, t)$}}
    \label{query}
    \begin{algorithmic}[1]
        \If {$x$ is not an ancestor of $t$ in $T_i$}
            \State \Return $\Dist_{T_i}(s, t)$
        \EndIf
    
        \If {$|V(G^i)| \leq 6$}
            \State \Return $\Dist_{G^i-x}(s, t)$
        \EndIf
        
        \If {$x \in V(P_{i})$ and $t \in V(G^{2i+1})$}
            \State \Return $G^i.\Fquery(x, t)$ 
        \EndIf
        \If {$x \in V(G^{2i})$ and $t \in V(G^{2i})$}
            \State \Return $\min (G^{2i}.\Query(x, t), G^i.\Dquery(x, t))$
        \EndIf
        \If {$x \in V(G^{2i+1})$ and $t \in V(G^{2i+1})$}
            \State \Return $G^{2i+1}.\Query(v_f, t)$
        \EndIf
    \end{algorithmic}
\end{algorithm}

\section{Concluding Remarks}
We presented an algorithm which constructs a $(1+\eps)$-VSDO for directed weighted graphs.
The constructed oracle attains $\tilde{O}(n \log (nW) /\eps)$ size and $\tilde{O}(\log (nW) /\eps)$ query processing time. 
The construction time is $\tilde{O}(m \log (nW)/ \eps + n \log^2 (nW)/ \eps^2)$. We conclude the paper with a few open 
questions listed below:
\begin{itemize}
    \item Can we shave off the extra log factors of our oracle in size and query time, to match them with the best known bounds by~\cite{BCHR20}? 
    \item Can we extend our result for handling the \emph{dual failure model} admitting two edge/vertex failures? Very recently, the extension 
    of Bernstein's algorithm~\cite{Ber10} to the dual failure model has been proposed~\cite{CZ24}, which attains $\tilde{O}(n^2)$ running time. 
    It is an intriguing open question if there exists an $(1 + \eps)$-approximate SSRP algorithm for dual failure model running in $\tilde{O}(n^2)$ 
    time or not. 
    \item Is it possible to break the known conditional lower bounds in the case of \emph{additive approximation}?  
    \item Can all-pairs and multi-source cases benefit from the relaxation to $(1 + \eps)$-approximation?
\end{itemize}

\appendix

\section{Construction of \texorpdfstring{$P_T$}{PT}-faulty \texorpdfstring{$(1+\eps_1)$}{1-e}-VSDO} \label{sec:VSDO}
This section provides the details of our $P_T$-faulty $(1+\eps_1)$-VSDO construction, which includes the construction of 
the DP-Oracle.

\subsection{Construction of DP-Oracle}
Assume that $p$ is a power of $2$ for ease of presentation. To focus on the high-level idea, the detailed proofs of all the lemmas are 
defferred to the next section.
Let $\Pi(v_k, t)$ be the set of all $s$-$t$ departing paths avoiding $v_k$, and $\eps_2 = \eps_1 / (2 \log n) = O(\eps / \log^2 n)$.
We define $\Pi(t) = \bigcup_{k \in [0, p-1]} \Pi(v_k, t)$, and introduce the notations $\Gamma(I, t)$, 
$I^i_j$, and $\mathcal{I}$ as defined in Section~\ref{sec:outline}. We firs presenta a key technical lemma:

\begin{lemma} \label{lma:dporacle}
    Let $\Phi(i, j, t)$ be the predicate defined as follows.
    \begin{align*}
        \Phi(i, j, t) &\Leftrightarrow \text{$\left(j = 0 \right) \vee \left((1 + \eps_2) \cdot \Minlength(\Gamma(I^i_j, t)) < \min_{0 \leq j' < j} \Minlength(\Gamma(I^i_{j'}, t))\right)$}
    \end{align*}
    Assume a function $r : \mathbb{N} \times \mathbb{N} \times V(G) \to \mathbb{R}_+$ satisfying the following two conditions
    for any $t \in V(G) \setminus \{s\}$ and $0 \leq i \leq \log p$.
    \begin{enumerate}
        \item $\min_{0 \leq j' \leq j} \Minlength(\Gamma(I^i_{j'}, t)) \leq r(i, j, t) \leq r(i, j-1, t)$ for $j \geq 1$.
        \item If $\Phi(i, j, t)$ is true, $r(i, j, t) = \Minlength(\Gamma(I^i_j, t))$.
    \end{enumerate}
    Let $j(i, v_f)$ be the value satisfying $v_f \in I^i_{j(i, v_f)}$. Define $r'(v_f, t)$ as follows.
    \begin{equation*}
        r'(v_f, t) = \min_{0 \leq i \leq \log p} r(i, j(i, v_f)-1, t).
    \end{equation*}
    Then $\Minlength(\Pi(v_f, t)) \leq r'(v_f, t) \leq (1 + \eps_1) \cdot \Minlength(\Pi(v_f, t))$ holds for any $v_f \in V(P_T)$
    and $t \in V(G)$.
\end{lemma}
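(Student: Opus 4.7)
The plan is to establish the two inequalities separately.

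For the lower bound $r'(v_f,t) \geq \Minlength(\Pi(v_f,t))$, I would observe that for every $i$ and every $j' \leq j(i,v_f)-1$ the interval $I^i_{j'}$ is entirely contained in $V(P[0,f-1])$. Consequently, every path $R \in \Gamma(I^i_{j'},t)$ has branching vertex strictly before $v_f$ on $P$, and (C1) together with (C2) (w.r.t.\ the avoidance vertex implicitly associated with $R$) forces $V(R)$ to miss $v_f$, so $R$ belongs to $\Pi(v_f,t)$. This yields $\Minlength(\Gamma(I^i_{j'},t)) \geq \Minlength(\Pi(v_f,t))$, and applying condition~1 gives $r(i,j(i,v_f)-1,t) \geq \min_{j' \leq j(i,v_f)-1} \Minlength(\Gamma(I^i_{j'},t)) \geq \Minlength(\Pi(v_f,t))$ for every $i$; taking the minimum over $i$ closes this direction.

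For the upper bound, I would fix the shortest path $R^* \in \Pi(v_f,t)$ with branching vertex $v_{k^*}$ and weight $V$. The starting point is the standard fact that $\{I^i_{j(i,v_f)-1}\}_{i \geq 1,\ j(i,v_f) \geq 1}$ is the dyadic decomposition of $V(P[0,f-1])$, so $v_{k^*}$ lies in a unique interval at some level $i^\circ$, giving $\Minlength(\Gamma(I^{i^\circ}_{j(i^\circ,v_f)-1},t)) = V$. My objective is then to exhibit some level $i$ with $r(i,j(i,v_f)-1,t) \leq (1+\eps_1) V$. To this end I would let $j^*_i$ denote the largest $\Phi$-true index at level $i$ that is $\leq j(i,v_f)-1$ (it exists because $\Phi(i,0,t)$ is trivially true). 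By conditions~1 and~2, $r(i,j(i,v_f)-1,t) \leq r(i,j^*_i,t) = \Minlength(\Gamma(I^i_{j^*_i},t))$, and because $\Phi$ is false throughout $(j^*_i,\,j(i,v_f)-1]$, the definition of $\Phi$ forces the running minimum of $\Minlength(\Gamma(I^i_{j'},t))$ to shrink by a factor of at most $1+\eps_2$ per step on this range.

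The hard part will be aggregating these per-level bounds into the required $(1+\eps_1)$ factor, because at a single fixed level the gap $j(i,v_f)-1-j^*_i$ can be as large as $2^i$, and naive telescoping would give only $(1+\eps_2)^{2^i} V$. To handle this I would exploit the structural identity $\Minlength(\Gamma(I^{i-1}_j,t)) = \min\bigl(\Minlength(\Gamma(I^i_{2j},t)),\, \Minlength(\Gamma(I^i_{2j+1},t))\bigr)$, from which one verifies that $\Phi(i,j,t)$ always forces $\Phi(i-1,\lfloor j/2\rfloor,t)$, so $\Phi$-true events propagate monotonically from fine to coarse levels. Combining this propagation with the flexibility of the minimum over $i$ in the definition of $r'$, the accumulated multiplicative loss telescopes across the $O(\log p) \leq \log n$ levels of the dyadic decomposition, yielding a total factor of at most $(1+\eps_2)^{\log n} \leq e^{\eps_2 \log n} = e^{\eps_1/2} \leq 1+\eps_1$; the calibration $\eps_2 = \eps_1/(2\log n)$ is chosen precisely to make this final inequality hold.
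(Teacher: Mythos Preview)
Your lower-bound argument is correct and coincides with the paper's. The upper-bound argument has a real gap in the ``hard part.''

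The propagation claim $\Phi(i,j,t)\Rightarrow\Phi(i-1,\lfloor j/2\rfloor,t)$ is true, but it does not deliver the telescoping you assert. You need something when $\Phi$ is \emph{false} at $(i^\circ,\,j(i^\circ,v_f)-1)$; your implication only tells you what happens when $\Phi$ is \emph{true}. In the false case its contrapositive says nothing useful about coarser levels. Moreover, the propagation relates a fine interval to the coarser interval \emph{containing} it, not to the adjacent partition piece; and the pieces of the dyadic decomposition of $P[0,f-1]$ are generally not at consecutive levels (for $f=9$ they sit at levels $1$ and $4$), so stepping $i\mapsto i-1$ does not walk along the partition. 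Also note that your collection $\{I^i_{j(i,v_f)-1}\}_i$ is only a \emph{cover} of $P[0,f-1]$, not a disjoint decomposition (for $f=9$ the intervals at levels $1,2,3$ are nested).

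The paper's proof avoids this by using an explicit partition $C=\{J_0,\dots,J_{|C|-1}\}$ of the prefix of $P$ up to the branching vertex (their Lemma on partitions, with $|C|\le\log p$) and the crucial identity that for each piece $J_k=I^{i_k}_{j_k}$ one has $\bigcup_{j'<j_k} I^{i_k}_{j'}=U_{k-1}:=J_0\cup\cdots\cup J_{k-1}$. Taking $h$ to be the largest index with $\Phi(i_h,j_h,t)$ true, condition~(ii) gives $r(i_h,j_h,t)=\Minlength(\Gamma(U_h,t))$, and for each $h'>h$ the falsity of $\Phi(i_{h'},j_{h'},t)$ together with the identity yields $\Minlength(\Gamma(U_{h'-1},t))\le(1+\eps_2)\,\Minlength(\Gamma(U_{h'},t))$. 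This is the step your ``per-step shrink at a fixed level'' cannot supply: here one picks up exactly one factor of $(1+\eps_2)$ \emph{per partition piece}, not per index at a fixed level, so the total exponent is at most $|C|-1\le\log p$. Finally $r'(v_f,t)\le r(i_h,\,j(i_h,v_f)-1,t)\le r(i_h,j_h,t)$ by condition~(i). To fix your proof, replace the propagation argument by this identity-based telescoping through the partition.
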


The function $r'$ in Lemma~\ref{lma:dporacle} obviously works as a $P_T$-faulty $(1 + \eps_1)$-DPO, and 
the value $r'(v_f, t)$ is easily computed from the function $r$. Hence the remaining issue is to construct 
the data structure of returning the value of $r(i, j, t)$ for $0 \leq i \leq \log p$, $0 \leq j < 2^i$, and $t \in V(G)$. 
The starting point is an algorithm of computing $\Minlength(\Gamma(I^i_j, v))$ for all $v \in V(G)$. Consider the graph $G_{i, j}$ obtained from $G$ by the following operations (see Fig. \ref{proggraph}):
\begin{enumerate}
    \item Remove all the edges in $P_T$, all incoming edges of vertices in $V(I^i_0) \cup V(I^i_1) \cup \dots \cup V(I^i_j)$, 
    and all outgoing edges of vertices in $V(I^i_{j+1}) \cup V(I^i_{j+2}) \cup \dots \cup V(I^i_{2^i-1})$.
    \item Add a new source vertex $s'$ and edges $(s', u)$ of weights $\Dist_{P_T}(s, u)$ for each $u \in I^i_j$.
\end{enumerate}

\begin{figure*}[t]
    \centering
    \includegraphics[keepaspectratio, scale=0.35]{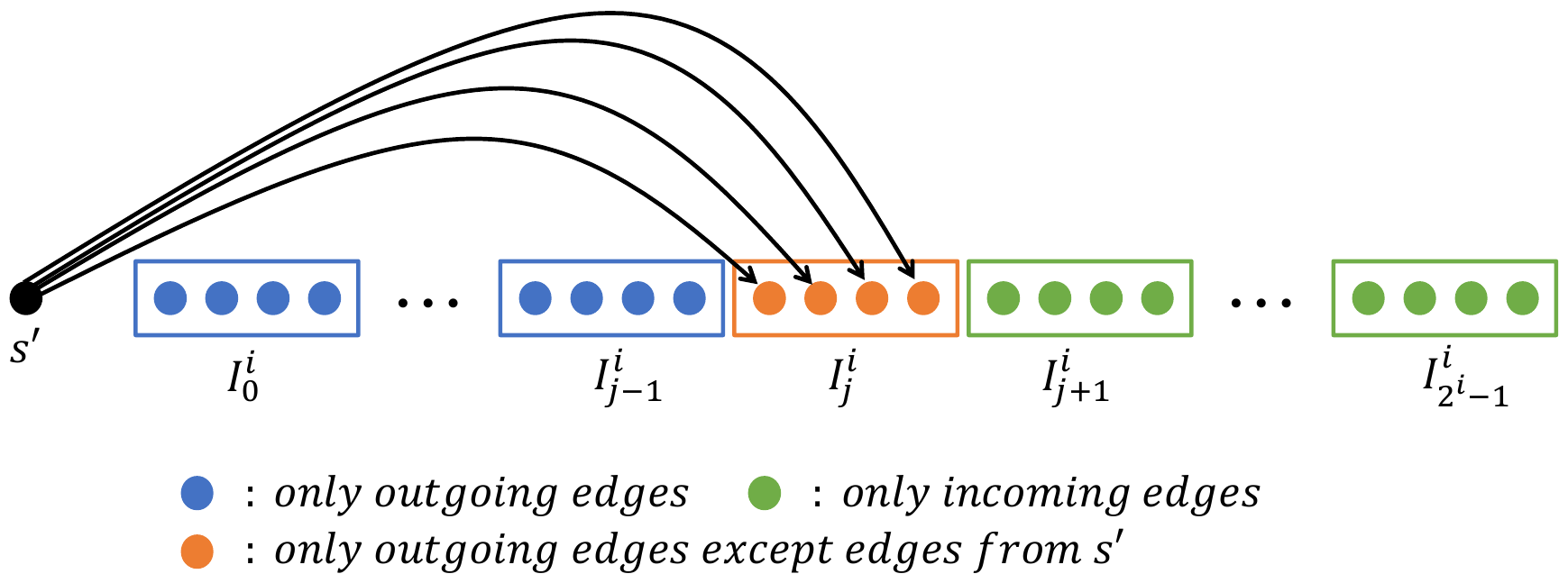}
    \caption{Example of $G_{i, j}$}
    \label{proggraph}
\end{figure*}

It is obvious that $\Dist_{G_{i, j}}(s', v) = \Minlength(\Gamma(I^i_{j'}, v))$ holds for any $v \in V(G)$. 
If we define $r$ as $r(i, j, v) = \min_{0 \leq j' < j} \Minlength(\Gamma(I^i_{j'}, v))$, the condition of Lemma~\ref{lma:dporacle} is 
satisfied. 
Hence running Dijkstra in $G_{i, j}$ for all $i$ and $j$ and storing all computation results provides the implementation 
of the data structure of $r$. However, the total running time of this approach is expensive, and thus not applicable. 
For any fixed $i$, the progressive Dijkstra computes the value $r(i, j, v)$ for all $j$ and $v$ approximately (in the 
sense of Lemma~\ref{lma:dporacle}) in $\tilde{O}(m/\eps)$ time. Roughly, the progressive Dijkstra iteratively 
applies a slightly modified version of the standard Dijkstra to $G_{i, j}$ in the increasing order of $j$. 
The modified points are summarized as follows:
\begin{itemize}
\item When processing $G_{i, 0}$, the distance vector $d$ managed by the Dijkstra algorithm 
initially stores $\infty$ for all vertices in $V(G)$. When processing $G_{i, j}$ for $j > 0$, the values $d[u]$ for 
$u \in V(I^i_j)$ are reset to $\infty$. For all other $u$, $d[u]$ keeps the results of processing $G_{i, j-1}$. 
\item All the vertices $u \in V(G) \setminus \{s'\}$ are inactive
at the beginning of processing $G_{i, j}$. It becomes active if the algorithm finds 
a $s$-$u$ path of length at most $d[u]/(1+ \eps_2)$ (i.e., the algorithm finds a path in $G_{i, j}$ whose length is substantially improved from the results for $G_{i, 0}, G_{i, 1}, \dots, G_{i, j-1}$). The vertex $u$ is added to the priority queue of the Dijkstra only 
when it becomes active. After becoming active, the update of $d[u]$ completely follows
the standard Dijkstra, i.e., it is updated if the algorithm finds a $s'$-$u$ path of length less than $d[u]$.
\item When $d[u]$ is updated by finding a shorter $s'$-$u$ departing path $Q$, the algorithm stores its branching vertex $b(Q)$
into the entry $b[u]$ of vector $b$.
\end{itemize}
The whole algorithm runs the above procedure for all $0 \leq i < \log p$. The pseudocode 
of the progressive Dijkstra is given in Algorithms~\ref{phase} and \ref{progdijk}.
The value $d[v]$ at the end of processing $G_{i, j}$ is used as the value of $r(i, j, v)$. Since it is costly to store 
those values explicitly, our algorithm stores the value of $r(i, j, v)$ only when $r(i, j - 1, v) > r(i, j, v)$ holds. If it holds, 
$v$ becomes active at processing $G_{i, j}$, and thus $r(i, j - 1, v)/(1 + \eps_2) > r(i, j, v)$ necessary holds. 
It implies that $r(i, 0, v), r(i, 1, v), \dots, r(i, 2^i-1, v)$ store at most $\left \lceil \log_{1+\eps_2}(nW) \right \rceil$ different 
values. Those values are stored in the list $\mathsf{upd}(i, v)$. The entry $(j, \ell, b) \in \mathsf{upd}(i, v)$ means that 
$r(i, j- 1, v) > r(i, j, v)$ and $r(i, j, v) = \ell$ hold, and the branching vertex of the corresponding $s'$-$v$ path is $b$.
Note that the information of the branching vertex is not necessary for constructing the DP-oracle, but required for the construction
of $G_1$. The time cost for accessing the value of $r(i, j, v)$ is $O(\log\log_{1+\eps_2}(nW))$, which is 
attained by a binary search over $\mathsf{upd}(i, v)$. We present the lemma claiming the correctness of this algorithm.

\begin{lemma} \label{dp-correctness}
    For any vertex $v \in V(G)$, $0 \leq i \leq \log p$ and $0 \leq j < 2^i$, let us define $r(i, j, v)$ as the value of $d[v]$ 
    after processing $G_{i, j}$. Then, $r$ satisfies the conditions (i), (ii) of Lemma~\ref{lma:dporacle}.
\end{lemma}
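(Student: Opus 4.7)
The plan is to verify both conditions of Lemma~\ref{lma:dporacle} for $r(i, j, v) := d[v]$ at the end of phase $j$. The core conceptual tool is a \emph{correspondence}: every finite value ever taken by $d[v]$ is the length of a concrete $s$-$v$ departing path with branching $b[v] \in V(I^i_{j'})$ for some $j' \leq j$, which can be established by induction on the updates of $d[v]$ (using that $b[v]$ is reset to $v$ exactly when $v \in V(I^i_{j'})$ and each propagated update $d[v] \gets d[u] + w(u, v)$ concatenates a non-$P$-edge onto the existing departing path for $u$). Hence $d[v] \geq \Minlength(\Gamma(I^i_{j'}, v)) \geq \min_{j'' \leq j}\Minlength(\Gamma(I^i_{j''}, v))$, giving the lower bound of Condition (i). For the upper bound $r(i, j, v) \leq r(i, j-1, v)$, I split on whether $v \in V(I^i_j)$: if $v \notin V(I^i_j)$ the entry $d[v]$ is carried over from phase $j-1$ and only decreased during phase $j$; if $v \in V(I^i_j) \subseteq V(P)$ the reset $d[v] \gets \infty$ is immediately superseded by relaxing $(s', v)$ of weight $\Dist_P(s, v) = \Dist_G(s, v)$ (SPT property), while $r(i, j-1, v)$ is realized by a genuine $s$-$v$ walk in $G$ and must be at least $\Dist_G(s, v)$.

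For Condition (ii), the case $j = 0$ is immediate: every $d[v]$ initially equals $\infty$, so the insertion guard $d[u] + w(u, v) < d[v]/(1+\eps_2)$ collapses to standard Dijkstra relaxation and phase $0$ becomes ordinary Dijkstra on $G_{i, 0}$ from $s'$, yielding $r(i, 0, v) = \Dist_{G_{i, 0}}(s', v) = \Minlength(\Gamma(I^i_0, v))$. For $j \geq 1$, I fix $v$ satisfying $\Phi(i, j, v)$ and choose a shortest $s'$-$v$ path $Q = \Path{u_0 = s', u_1, \ldots, u_k = v}$ in $G_{i, j}$; note that $u_1 \in V(I^i_j)$ and $u_2, \ldots, u_{k-1}$ lie in $V(G) \setminus V(P)$ by the construction of $G_{i, j}$. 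The key auxiliary claim is that $\Phi(i, j, u_\ell)$ holds for every $1 \leq \ell \leq k$: any departing path from $s$ to $u_\ell$ with branching in $V(I^i_{j'})$, concatenated with the suffix $Q[\ell, k]$ and shortcut to simplicity, is a departing $s$-$v$ path with the same branching, so
\[
\Minlength(\Gamma(I^i_{j'}, v)) \leq \Minlength(\Gamma(I^i_{j'}, u_\ell)) + \Minlength(\Gamma(I^i_j, v)) - \Minlength(\Gamma(I^i_j, u_\ell)).
\]
Combining with $\Phi(i, j, v)$ and $\Minlength(\Gamma(I^i_j, u_\ell)) \leq \Minlength(\Gamma(I^i_j, v))$ yields $\Phi(i, j, u_\ell)$.

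The main inductive argument then shows that each $u_\ell$ is extracted from $H$ during phase $j$ with $d[u_\ell] = \Minlength(\Gamma(I^i_j, u_\ell))$. The base $\ell = 1$ uses that $u_1 \in V(I^i_j)$ and so $d[u_1] = \Dist_P(s, u_1) = \Minlength(\Gamma(I^i_j, u_1))$ via the direct $(s', u_1)$-edge, with no subsequent decrease since $\Dist_P(s, u_1) = \Dist_G(s, u_1)$. For the inductive step, a within-phase monotonicity observation is key: every value inserted into $H$ equals a previously extracted value plus a strictly positive edge weight, so extracted values in a single phase form a non-decreasing sequence. Consequently $u_{\ell+1}$ cannot be extracted before $u_\ell$ in phase $j$, since by the correspondence its extracted value would be at least $\Minlength(\Gamma(I^i_j, u_{\ell+1})) > \Minlength(\Gamma(I^i_j, u_\ell))$, contradicting monotonicity. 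When $u_\ell$ is extracted and $(u_\ell, u_{\ell+1})$ is relaxed with candidate value $\Minlength(\Gamma(I^i_j, u_{\ell+1}))$, either $u_{\ell+1} \in H$ (a decrease-key sets $d[u_{\ell+1}]$ to this value) or $d[u_{\ell+1}]$ is still its initial value $r(i, j-1, u_{\ell+1})$; Condition (i) at phase $j-1$ together with $\Phi(i, j, u_{\ell+1})$ gives $r(i, j-1, u_{\ell+1}) \geq \min_{j'' < j}\Minlength(\Gamma(I^i_{j''}, u_{\ell+1})) > (1+\eps_2)\Minlength(\Gamma(I^i_j, u_{\ell+1}))$, so the insertion condition triggers. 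The correspondence lower bound then forces equality at extraction, and subsequent relaxations to $v$ in phase $j$ only use $G_{i, j}$-edges and cannot drive $d[v]$ below $\Minlength(\Gamma(I^i_j, v))$.

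The main obstacle is ruling out that the $(1+\eps_2)$-insertion cull leaves some $u_{\ell+1}$ stuck at a $d$-value in the dangerous window $\bigl(\Minlength(\Gamma(I^i_j, u_{\ell+1})), (1+\eps_2)\Minlength(\Gamma(I^i_j, u_{\ell+1}))\bigr]$, where further insertion would be blocked yet the value is not already the target minimum. This is precluded by the combination of within-phase extraction monotonicity (which forbids premature extraction of $u_{\ell+1}$) and the propagation $\Phi(i, j, v) \Rightarrow \Phi(i, j, u_\ell)$ along $Q$ (which guarantees the initial gap is large enough to trigger insertion); without either ingredient, the inductive step would break.
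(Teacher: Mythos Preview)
Your proof is correct but takes a genuinely different route from the paper for Condition~(ii). The paper argues by contrapositive: assuming $r(i,j,v) > \Minlength(\Gamma(I^i_j, v))$, it locates the \emph{first} vertex $a_\ell$ on the shortest $s'$--$v$ path in $G_{i,j}$ that fails to become active, observes that $(1+\eps_2)\,\Minlength(\Gamma(I^i_j, a_\ell)) \geq r(i,j,a_\ell) = r(i,j',a_\ell)$ for the last phase $j'<j$ in which $a_\ell$ was active, and then splices the phase-$j'$ path to $a_\ell$ with the phase-$j$ suffix to exhibit a path in $\Gamma(I^i_{j'}, v)$ of length at most $(1+\eps_2)\,\Minlength(\Gamma(I^i_j, v))$, directly falsifying $\Phi(i,j,v)$. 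This avoids any induction along the path and any propagation of $\Phi$; it is shorter.

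Your direct approach instead establishes the structural fact that $\Phi(i,j,v)$ propagates backward along the shortest $G_{i,j}$-path (your auxiliary claim), and then runs a forward Dijkstra-style induction showing each $u_\ell$ is correctly settled. This is more constructive and makes the algorithm's behavior explicit, at the cost of the extra propagation lemma. Two small points worth tightening: (a) your auxiliary claim as stated for $\ell=1$ need not hold, since $u_1\in V(P)$ and the concatenation may fail to remain in $\Gamma(I^i_{j'}, v)$; fortunately your base case handles $\ell=1$ directly, so you only need the claim for $\ell\geq 2$; (b) when you invoke ``the correspondence'' to lower-bound the extracted value of $u_{\ell+1}$ by $\Minlength(\Gamma(I^i_j, u_{\ell+1}))$, the correspondence as you stated it only gives branching in some $V(I^i_{j'})$ with $j'\leq j$ --- you need the sharper fact that any $d$-value \emph{set during phase $j$} traces back (via the within-phase chain of updates) to a branching vertex in $V(I^i_j)$ specifically. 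This follows easily from your within-phase monotonicity argument but should be said. Your treatment of Condition~(i), in particular the case $v\in V(I^i_j)$ where $d[v]$ is reset, is actually more careful than the paper's, which simply asserts non-increase as ``obvious'' despite the reset.
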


The running time of the progressive Dijkstra is bounded by the following lemma.
\begin{lemma} \label{runningtimeProgDijk}
The progressive Dijkstra processes $G_{i, j}$ for all $0 \leq j < 2^i$ 
and $0 \leq i < \log p$ in
$O(\eps_2^{-1} \cdot \log n \cdot \log (nW) \cdot (m + n\log n))$ time.
\end{lemma}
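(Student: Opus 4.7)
The plan is to amortize the cost across all $2^i$ iterations of the outer loop of Algorithm~\ref{phase} by bounding (a) how often each vertex enters the Fibonacci heap, (b) how many edges are scanned, and (c) the per-operation heap cost. Throughout I use $\eps_2 = \eps_1/(2\log n) = \Theta(\eps/\log^2 n)$, so $\eps_2^{-1}\log nW = O(\eps^{-1}\log^2 n \log nW)$.

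\textbf{Activation budget.} I will show that every vertex $v$ is inserted into $H$ at most $O(\eps_2^{-1}\log nW)$ times summed over all iterations. The only line that inserts $v$ into $H$ is line~20, whose guard requires that the newly discovered $s'$-$v$ path have length strictly less than $d[v]/(1+\eps_2)$. Across iterations, $d[v]$ is reset to $\infty$ only when $v \in V(I^i_j)$ for the current $j$, and because $\{I^i_j\}_{0\le j<2^i}$ partitions $V(P)$ for a fixed $i$, each vertex of $V(P)$ is reset exactly once and each vertex outside $V(P)$ is never reset. Apart from the single reset, the stored value $d[v]$ is monotone non-increasing. Consequently, between two consecutive insertions of $v$ the value $d[v]$ drops by a factor $\ge (1+\eps_2)$, and since any finite value of $d[v]$ lies in $[1, nW]$, the number of insertions per vertex is $O(\log_{1+\eps_2} nW) = O(\eps_2^{-1}\log nW)$, plus a single extra insertion accounted for the reset. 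Summed over all vertices, the total number of insertions (and hence extractions) is $O(n\cdot \eps_2^{-1}\log nW)$.

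\textbf{Edge scans and decrease-keys.} Each scan of an outgoing edge $(u,v)$ is attributed to an extraction of $u$. If $k_u$ is the number of times $u$ is extracted, then $k_u = O(\eps_2^{-1}\log nW)$ by the activation budget above, and the total number of edge scans is $\sum_{u} k_u\deg(u) \le \max_u k_u \cdot \sum_u \deg(u) = O(m\cdot \eps_2^{-1}\log nW)$. Each scan triggers at most one Decrease-Key or Insert, together with at most one update to $\mathsf{upd}(i,v)$; the latter takes $O(\log(\eps_2^{-1}\log nW)) = O(\log(\eps^{-1}n\log nW))$ time by maintaining $\mathsf{upd}(i,v)$ as a balanced search tree keyed by $j$, or $O(1)$ if one threads a pointer to the current-$j$ entry during the iteration.

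\textbf{Putting the heap costs together.} With a Fibonacci heap, Insert and Decrease-Key cost $O(1)$ amortized, and Extract-Min costs $O(\log|H|) = O(\log n)$ amortized; to match the stated expression I bound this by the larger quantity $O(\log(\eps^{-1}n\log nW))$. The total time is therefore
\begin{equation*}
O\!\left(\underbrace{m\cdot \eps_2^{-1}\log nW}_{\text{edge scans + decrease-keys}} \;+\; \underbrace{n\cdot \eps_2^{-1}\log nW\cdot \log(\eps^{-1}n\log nW)}_{\text{extractions}}\right),
\end{equation*}
which after substituting $\eps_2^{-1}\log nW = O(\eps^{-1}\log^2 n\log nW)$ gives exactly the bound claimed in the lemma.

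The main obstacle is establishing the per-vertex activation budget in Step~1 cleanly across iteration boundaries: I must argue that the geometric decrease invariant on $d[v]$ is preserved whenever $v$ is \emph{not} in the currently processed interval $I^i_j$, since in that case line~4 of Algorithm~\ref{phase} leaves $d[v]$ untouched. Verifying this requires checking that every write to $d[v]$ inside Algorithm~\ref{progdijk} (lines~8 and~16) only decreases it and that the Insert-guard compares against the up-to-date value, so the amortization argument chains correctly across all $2^i$ iterations rather than restarting each time.
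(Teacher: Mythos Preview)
Your proof is correct and follows essentially the same approach as the paper's: both bound the number of heap insertions per vertex by $O(\log_{1+\eps_2} nW)$ via the geometric-decrease invariant on $d[v]$ (with the single reset for vertices on $P$ handled separately), then charge edge scans and \textbf{Decrease-Key} calls to extractions. The only item you omit is the $O(m)$ total cost of building the graphs $G_{i,j}$ incrementally across $j$, which the paper notes explicitly but which is dominated by the other terms anyway.
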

The actual entity of the $(1 + \epsilon_1)$-DPO we construct is $\mathsf{upd}$.
For all $i$ and $v$, $\mathsf{upd}(i, v)$ stores at most 
$\left \lceil \log_{1+\eps_2}(nW) \right \rceil$ different values. Hence the total size 
of the oracle is $O(\eps_2^{-1} n \cdot \log n \cdot \log (nW))$.  The computation of $r'$ (i.e., processing queries) takes $O(\log p)$ times of accessing $\mathsf{upd}$.
Each access takes $O(\log (\eps_2^{-1} \log (nW)))$ time, and thus the query time is $O(\log n \cdot \log (\eps_2^{-1} \log (nW)))$. 
By the fact of $\eps_2 = O(\eps_1 / \log n) = O(\eps / \log^2 n)$, the following lemma is obtained. 

\begin{lemma} \label{lma:constDP}
There exists an algorithm of constructing a $P_T$-faulty $(1 + \eps_1)$-DPO of size $O(\eps_1^{-1} n \cdot \log^2 n \cdot \log (nW))$. 
The construction time is $O(\eps_1^{-1} \cdot \log^2 n \cdot \log (nW) \cdot (m + n\log n))$ and the query processing time is $O(\log n \cdot \log (\eps_1^{-1} \log (nW)))$.
\end{lemma}

We also present the following auxiliary corollary, which is used in the construction of $G_1$.

\begin{corollary} \label{corol:dporacle}
For any $(\cdot, \ell, b) \in \mathsf{upd}(i, v)$, there exists a $s$-$v$ departing path of length $\ell$
with branching vertex $b$ in $G$. In addition, for any $v_f \in V(P_T)$ and $s$-$v$ shortest departing path 
$Q$ avoiding $v_f$ in $G$, there exists $0 \leq i \leq \log p$ and an entry $(\cdot, \ell, b(Q)) \in \mathsf{upd}(i, v)$ 
such that
$w(Q) \leq \ell \leq (1 + \eps_1) \cdot w(Q)$ holds.
\end{corollary}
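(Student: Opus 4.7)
The plan is to handle the two statements of the corollary separately, since the first is a soundness property and the second a completeness property.

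For the first statement, I would prove by induction on the execution of Algorithm~\ref{progdijk} that the following invariant is maintained throughout the processing of $G_{i,j}$: for every $v$ with $d[v] < \infty$, there is an $s'$-$v$ path $R'$ in $G_{i,j}$ of length $d[v]$, and the $s$-$v$ path $R$ in $G$ obtained from $R'$ by replacing its starting edge $(s', v_k)$ with the prefix $P[0, k]$ of length $\Dist_P(s, v_k)$ is a departing path whose branching vertex is $b[v]$. The base case is when $s'$ relaxes $v \in V(I^i_j)$, giving $d[v] = \Dist_P(s, v)$ and $b[v] = v$, with the associated departing path $P[0, v]$ in $G$. The inductive step uses the assignment $b[v] \gets b[u]$ during relaxation with $u \neq s'$ and exploits the structural property of $G_{i, j}$ that internal vertices of any $s'$-$v$ path must lie outside $V(P) \setminus V(I^i_j)$, so appending $(u, v)$ preserves the departing-path property without altering the branching vertex. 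Since every entry $(j, \ell, b) \in \mathsf{upd}(i, v)$ records $(d[v], b[v])$ at some moment during iteration $j$ of level $i$, the invariant yields the first statement.

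For the second statement, let $b(Q) = v_{j^*}$ with $j^* < f$. My plan is to fix $i = \log p$ so that $I^{\log p}_{j^*} = \{v_{j^*}\}$, and exploit two observations: (a) by the invariant above and the construction of $G_{\log p, j^*}$, any $s'$-$v$ path in $G_{\log p, j^*}$ lifts to a departing path in $G$ with branching vertex exactly $v_{j^*}$ that automatically avoids $v_f$ (because the outgoing edges of $v_k$ for $k > j^*$ are removed, forbidding internal traversal of $v_f$); and (b) consequently, $\Minlength(\Gamma(I^{\log p}_{j^*}, v)) \geq \Minlength(\Pi(v_f, v)) = w(Q)$, with equality witnessed by $Q$. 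When $\Phi(\log p, j^*, v)$ holds, Lemma~\ref{dp-correctness} together with condition (ii) of Lemma~\ref{lma:dporacle} gives $r(\log p, j^*, v) = w(Q)$, and since this strictly improves upon $r(\log p, j^*-1, v)$ by more than a factor of $(1+\eps_2)$ by the definition of $\Phi$, the vertex $v$ must activate during iteration $j^*$, so Algorithm~\ref{progdijk} inserts an entry $(j^*, w(Q), v_{j^*})$ into $\mathsf{upd}(\log p, v)$, satisfying the required bounds.

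The main obstacle is the complementary case in which $\Phi(\log p, j^*, v)$ is false, because then $v$ need not activate at iteration $j^*$ of level $\log p$ and no entry with third component $v_{j^*}$ is directly produced there. My plan is to descend through the hierarchy of intervals and locate the largest coarser level $i' < \log p$ at which $\Phi(i', j(i', j^*), v)$ holds; at that level the progressive Dijkstra activates $v$ and traces a shortest path with branching vertex in $V(I^{i'}_{j(i', j^*)})$, and the goal is to argue that this branching vertex is precisely $v_{j^*}$. The crux is showing that the propagated $b[v]$ matches $v_{j^*}$ rather than any other vertex of the containing interval; I expect this to follow from a careful bookkeeping of activation chains, combined with the telescoping bound $(1+\eps_2)^{O(\log n)} \leq 1 + \eps_1$ enabled by the choice $\eps_2 = \eps_1/(2 \log n)$, so that the per-level $(1+\eps_2)$ slack accumulates to the target $(1+\eps_1)$ factor over $O(\log n)$ levels.
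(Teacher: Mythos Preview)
Your argument for the first statement via an invariant on $(d[v], b[v])$ is sound and matches what the algorithm description implicitly guarantees. The genuine gap is in the second statement, specifically your handling of the case where $\Phi(\log p, j^*, v)$ is false. Your plan---descend to a coarser level $i'$ and argue that the recorded branching vertex there is still $v_{j^*}$---does not work: at level $i'$ the interval $I^{i'}_{j(i',v_{j^*})}$ contains vertices other than $v_{j^*}$, and the progressive Dijkstra records in $b[v]$ the branching vertex of whichever shortest path it actually traces, with no mechanism forcing that choice to be $v_{j^*}$. Concretely, if some $v_{j'}$ with $j'<j^*$ admits a departing path to $v$ of length in $[w(Q),(1+\eps_2)w(Q)]$, then $\Phi(\log p,j^*,v)$ fails, yet at coarser levels the Dijkstra may equally well settle $v$ via the $v_{j'}$-branching path and store $b[v]=v_{j'}$. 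Worse, for small enough $i'$ the containing interval includes $v_f$ itself, at which point $\Minlength(\Gamma(I^{i'}_{j'},v))$ can drop strictly below $w(Q)$ and the recorded branching vertex can lie after $v_f$, violating both bounds you need. The ``careful bookkeeping of activation chains'' you allude to does not close this gap.

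The paper states this corollary without proof. What one actually extracts from the argument of Lemma~\ref{lma:dporacle} is the weaker conclusion that some entry $(\cdot,\ell,v_b)\in\mathsf{upd}(i,v)$ exists with $v_b$ lying in $V(P[0,f-1])$ (not necessarily $v_b=b(Q)$) and $w(Q)\le\ell\le(1+\eps_1)w(Q)$: take the strict pair $(i_h,j_h)$ constructed in that proof, observe that $v$ must activate there since $r(i_h,j_h,v)<r(i_h,j_h-1,v)$, and note that $I^{i_h}_{j_h}\subseteq P[0,f-1]$ so the recorded $b[v]$ is a vertex preceding $v_f$. This weaker form is in fact all that the proof of Lemma~\ref{G1-corectness} uses---the replacement walk in $G_1$ follows $P$ up to $v_b$ and then takes the added edge $(v_b,v_c)$, which is valid for any $b<f$, and the total length up to $v_c$ is $\ell$ regardless of which $v_b$ is chosen. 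So the obstacle you hit is most plausibly an imprecision in the corollary's wording rather than a flaw in the paper's overall argument; but your proposal as written is attempting to establish the stronger, and likely unprovable-as-stated, version.
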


\begin{algorithm}[t]
    \caption{\textbf{Prog-Main$(G, s, P_T, i)$}}
    \label{phase}
    \begin{algorithmic}[1]
        \State Create vectors $d$ and $b$ initialized with $d[v], b[v] \gets \infty$ for all $v \in V(G)$
        \For {$0 \leq j < 2^i$}
            \State create a $G_{i, j}$ 
            \State $d[v] \gets \infty$ for all $v \in V(I^i_j)$
            \State $b[v] \gets v$ for all $v \in V(I^i_j)$
            \State $\textbf{ProgDijk-Recursion}(G_{i, j}, s', d, b, i, j)$
        \EndFor
    \end{algorithmic}
\end{algorithm}

\begin{algorithm}[t]
    \caption{\textbf{ProgDijk-Recursion$(G, s, d, b, i, j)$}}
    \label{progdijk}
    \begin{algorithmic}[1]
        \State $d[s] \gets 0$
        \State Create an empty Fibonacci Heap $H$
        \State $\textbf{Insert}(H, s, 0)$
        \While {$H$ is not empty}
            \State $u \gets \textbf{Extract-Min}(H)$
            \For {$v \in N_G(u)$}
                \If {$d[u]+w_G(u, v) < d[v]$ and $v \in H$}
                    \State $d[v] \gets d[u]+w_G(u, v)$
                    \If {$u \neq s$}
                        \State $b[v] \gets b[u]$
                    \EndIf
                    \State $\textbf{Decrease-Key}(H, v, d[v])$
                    \State change $(j, \cdot, \cdot) \in \mathsf{upd}(i, v)$ to $(j, d[v], b[v])$
                \EndIf
                \If {$d[u]+w_G(u, v) < d[v]/(1+\eps_2)$ and $v \notin H$}
                    \State $d[v] \gets d[u]+w_G(u, v)$
                    \If {$u \neq s$}
                        \State $b[v] \gets b[u]$
                    \EndIf
                    \State $\textbf{Insert}(H, v, d[v])$
                    \State add $(j, d[v], b[v])$ to $\mathsf{upd}(i, v)$
                \EndIf
            \EndFor
        \EndWhile
    \end{algorithmic}
\end{algorithm}

\subsection{Correctness}

We provide the formal proofs of the lemmas in the previous section. To prove Lemma~\ref{lma:dporacle}, we first introduce an auxiliary lemma. 
For any sub-path $I$ of $P_T$, a set of sub-paths 
$C = \{J_0, J_1, \dots, J_{k-1}\} \subseteq \mathcal{I}$ is called a \emph{partition} of $I$ if 
$V(J_0), V(J_1), \dots, V(J_{k-1})$ exactly cover $V(I)$. 

\begin{lemma} \label{lma:exactcover}
    For any $b \in [0, f-1]$, there exists a partition $C$ of $P_T[0, b-1]$ such that $|C| \leq \lceil \log b \rceil$.
\end{lemma}

\begin{proof}
    We construct a partition $C$ of $P_T[0, b-1]$ by the following procedure.
    \begin{enumerate}
        \item $a \gets 0$.
        \item Find the largest $i$ such that $a + 2^i < b$ and $P_T[a, a+2^i-1] \in \mathcal{I}$ holds.
        \item Add $P_T[a, a+2^i-1]$ to $C$.
        \item $a \gets a + 2^i$.
        \item Return to the step 2 unless $a = b - 1$.
    \end{enumerate}
    If $a < b - 1$ holds in step 2, $i=0$ satisfies the condition. Hence the procedure eventually satisfies $a = b - 1$
    and terminates. The resulting $C$ is obviously a partition of $P_T[0, b-1]$. Let $a(j)$ be the value of $a$ at the beginning of the $j$-th iteration of the above process, and $i(j)$ be the value of $i$ at the second step of the $j$-th iteration.
    The inequality $b-a(j) \leq 2^{i(j)+1}$ holds for any $j$. Since $b-a(j+1) =b - a(j) - 2^{i(j)} \leq (b-a(j))/2$, the number of iterations is at most $\lceil \log b \rceil$.
    That is, $|C| \leq \lceil \log b \rceil$ holds.
\end{proof}

\begin{rlemma}{lma:dporacle}
    Let $\Phi(i, j, t)$ be the predicate defined as follows.
    \begin{align*}
        \Phi(i, j, t) &\Leftrightarrow \text{$\left(j = 0 \right) \vee \left((1 + \eps_2) \cdot \Minlength(\Gamma(I^i_j, t)) < \min_{0 \leq j' < j} \Minlength(\Gamma(I^i_{j'}, t))\right)$}
    \end{align*}
    Assume a function $r : \mathbb{N} \times \mathbb{N} \times V(G) \to \mathbb{R}_+$ satisfying the following two conditions
    for any $t \in V(G) \setminus \{s\}$ and $0 \leq i \leq \log p$.
    \begin{enumerate}
        \item $\min_{0 \leq j' \leq j} \Minlength(\Gamma(I^i_{j'}, t)) \leq r(i, j, t) \leq r(i, j-1, t)$ for $j \geq 1$.
        \item If $\Phi(i, j, t)$ is true, $r(i, j, t) = \Minlength(\Gamma(I^i_j, t))$.
    \end{enumerate}
    Let $j(i, v_f)$ be the value satisfying $v_f \in I^i_{j(i, v_f)}$. Define $r'(v_f, t)$ as follows.
    \begin{equation*}
        r'(v_f, t) = \min_{0 \leq i \leq \log p} r(i, j(i, v_f)-1, t).
    \end{equation*}
    Then $\Minlength(\Pi(v_f, t)) \leq r'(v_f, t) \leq (1 + \eps_1) \cdot \Minlength(\Pi(v_f, t))$ holds for any $v_f \in V(P_T)$
    and $t \in V(G)$.
\end{rlemma}

\begin{proof} Fix an arbitrary $t \in V(G) \setminus \{s\}$ throughout the proof. By the condition 1, we have
    \begin{equation} \label{eq1}
        r'(v_f, t) \geq \min_{0 \leq i \leq \log p} \min_{0 \leq j' < j(i, v_f)} \Minlength(\Gamma(I^i_{j'}, t)).
    \end{equation}
    Since $\Gamma(I^i_j, t) \subseteq \Pi(v_f, t)$ holds for any $0 \leq i \leq \log p$ and $0 \leq j < j(i, v_f)$,
    we also have
    \begin{equation} \label{eq2}
        \min_{0 \leq i \leq \log p} \min_{0 \leq j < j(i, v_f)}  \Minlength(\Gamma(I^i_j, t)) \geq \Minlength(\Pi(v_f, t)).
    \end{equation}
    Combining two expressions (\ref{eq1}) and (\ref{eq2}), we obtain $r'(v_f, t) \geq \Minlength(\Pi(v_f, t))$.
    Next, we prove $r'(v_f, t) \leq (1 + \eps_1) \cdot \Minlength(\Pi(v_f, t))$.
    If $\Phi(i, j, t)$ is true, we say that $(i, j)$ is \emph{strict}.
    Let $Q = \Minpath(\Pi(v_f, t))$ and $v_b = b(Q)$.
    From Lemma \ref{lma:exactcover}, there exists a partition $C = \{J_0, J_1, \dots, J_{|C|-1}\}$ of $P_T[0, b-1]$ with size 
    at most $\lceil \log b \rceil \leq \log p$.  For any $0 \leq k < |C|$, let $U_k$ be the prefix 
    $V(J_0) \cup V(J_1) \cup \dots,V(J_k)$ of $P_T$, and $i_k$ and $j_k$ be the values satisfying 
    $J_k = I^{i_k}_{j_k}$. We also define $h$ as the largest index such that $(i_h, j_h)$ is strict. Note that such $h$ 
    necessary exists because $(i_0, j_0)$ is always strict. By the condition 2, we have,
    \begin{equation} \label{eq3}
        r(i_h, j_h, t) = \Minlength(\Gamma(I^{i_h}_{j_h}, t)) \leq \Minlength(\Gamma(U_h, t)),
    \end{equation}
    where the right-side inequality comes from the fact that $\Phi(i_h, j_h, t)$ is true.
    Since $\Phi(i_{h'}, j_{h'}, t)$ is false for any $h' > h$, we also have 
    \begin{equation} \label{eq4}
        (1 + \eps_2) \cdot \Minlength(\Gamma(I^{i_{h'}}_{j_{h'}}, t)) \geq \min_{0 \leq j' < j_{h'}} \Minlength(\Gamma(I^{i_{h'}}_{j'}, t)).
    \end{equation}
    Since $V(I^{i_{h'}}_{0}) \cup V(I^{i_{h'}}_{1}) \cup \dots \cup V(I^{i_{h'}}_{j_{h'}-1}) = V(U_{h' - 1})$ holds, we obtain
    \begin{equation} \label{eq5}
        \min_{0 \leq j' < j_{h'}} \Minlength(\Gamma(I^{i_{h'}}_{j'}, t)) = \Minlength(\Gamma(U_{h'-1}, t)).
    \end{equation}
    By (\ref{eq4}) and (\ref{eq5}), the following inequality is obtained.
    \begin{equation} \label{eq6}
        (1 + \eps_2) \cdot \Minlength(\Gamma(I^{i_{h'}}_{j_{h'}}, t)) \geq \Minlength(\Gamma(U_{h'-1}, t)).
    \end{equation}
    By the fact of $\Minlength(\Gamma(U_{h'}, t)) = \min\{\Minlength(\Gamma(U_{h'-1}, t)), \Minlength(\Gamma(I^{i_{h'}}_{j_{h'}}, t))\}$ and (\ref{eq6}), 
    we have 
    \begin{equation} \label{eq7}
        \Minlength(\Gamma(U_{h' - 1}, t)) \leq (1 + \eps_2) \cdot \Minlength(\Gamma(U_{h'}, t)).
    \end{equation}
    In addition, by the definition of $Q$, 
    \begin{equation} \label{eq8}
        \Minlength(\Gamma(U_{|C|-1}, t)) = w(Q)
    \end{equation}
    holds. By (\ref{eq3}), (\ref{eq7}), and (\ref{eq8}), we conclude
    \begin{align*}
        r(i_h, j_h, t)  &\leq \Minlength(\Gamma(U_{h}, t)) \\
                        &\leq (1+\eps_2) \cdot \Minlength(\Gamma(U_{h+1}, t)) \\
                        &\leq  \cdots \\
                        &\leq (1 + \eps_2)^{|C|-1-h} \cdot \Minlength(\Gamma(U_{|C|-1}, t)) \\
                        &\leq (1 + \eps_2)^{\log p} \cdot w(Q) \\
                        &\leq (1 + \eps_1) \cdot w(Q).
    \end{align*}
    Since $r(i_h, j(i_h, v_f)-1, t) \leq r(i_h, j_h, t)$ holds by the condition 1, we obtain $r'(v_f, t) \leq r(i_h, j_h, t)$.
    The lemma is proved.
\end{proof}

\begin{rlemma}{dp-correctness}
    For any vertex $v \in V(G)$, $0 \leq i \leq \log p$, and $0 \leq j < 2^i$, let us define $r(i, j, v)$ as the value of $d[v]$ 
    after processing $G_{i, j}$. Then, $r$ satisfies the conditions 1 and 2 of Lemma~\ref{lma:dporacle}.
\end{rlemma}

\begin{proof}
\sloppy{
    First, we consider the condition 1. 
    At the end of processing $G_{i,j}$, all the graphs $G_{i, 0}, G_{i, 1}, \dots, G_{i, j}$ 
    have been processed. Since $r(i, j, v)$ is not less than the length of the shortest $s'$-$v$ path in any $G_{i, j'}$ of $0 \leq j' \leq j$, 
    we have
    \begin{align*}
        r(i, j, v) \geq \min_{0 \leq j' \leq j} \Dist_{G_{i, j'}}(s', v) = \min_{0 \leq j' \leq j} \Minlength(\Gamma(I^i_{j'}, v)).
    \end{align*}
    In addition, $r$ is obviously non-increasing with respect to $j$. Hence the condition 1 is satisfied.
    Consider the condition 2. For $j = 0$, the standard Dijkstra is executed to process $G_{i,0}$. Hence the condition 
    2 obviously holds.
    For $j \geq 1$, $\Phi(i, j, v)$ implies $\Minlength(\Gamma(I^i_j, v)) = \min_{0 \leq j' \leq j} \Minlength(\Gamma(I^i_{j'}, v))$.
    Since $r(i, j, v) \geq \min_{0 \leq j' \leq j} \Minlength(\Gamma(I^i_{j'}, v))$ holds by the condition 1, 
    $\Phi(i, j, v) \Rightarrow (r(i, j, v) = \Minlength(\Gamma(I^i_j, v)))$ and $\Phi(i, j, v) \Rightarrow (r(i, j, v) \leq \Minlength(\Gamma(I^i_j, v)))$ are equivalent. Hence what we show is 
    \begin{align} \label{eq:keypred}
        r(i, j, v) > \mathsf{minL}(\Gamma(I^i_j, v)) \Rightarrow
        (1+\eps_2) \cdot \mathsf{minL}(\Gamma(I^i_j, v)) \geq \min_{0 \leq j' < j} \mathsf{minL}(\Gamma(I^i_{j'}, v)),
    \end{align}
    which deduces the condition 2. Let $R_{i, j, v} = \Path{a_0, a_1, \dots, a_{k-1}}$ $(s'=a_, v=a_{k-1})$ be the shortest path from $s'$ to $v$ in $G_{i, j}$. 
    By the precondition of the predicate above, $r(i, j, v)$ is not equal to $w(R_{i,j, v})$. It implies that there exists a vertex 
    $a_\ell$ on $R_{i, j, v}$ which does not become active in processing $G_{i,j}$, because $d[v]$ is correctly 
    updated with the value $w(R_{i,j, v})$ if $a_0, a_1, \dots, a_{k-1}$ are all active. 
    Without loss of generality, we assume that $a_{\ell}$ is such a vertex with the smallest $\ell$. Since all the vertices 
    $u \in V(I^i_j)$ are necessarily active due to the reset of $d[u]$, $a_{\ell}$ is a
    vertex not on $P_T$.  At the end of processing $G_{i, j}$, $(1 + \eps_2) \cdot w(R_{i,j, a_\ell}) > d[a_\ell] = r(i, j, a_{\ell})$ holds.
    By definition, $w(R_{i, j, a_\ell}) = \Dist_{G_{i,j}}(s', a_\ell) = \Minlength(\Gamma(I^i_{j}, a_\ell))$ holds, and thus
    we conclude $r(i, j, a_{\ell}) < (1 + \eps_2) \cdot \Minlength(\Gamma(I^i_{j}, a_\ell))$.
    Let $j' < j$ be the largest value such that $a_\ell$ becomes active in processing $G_{i, j'}$. 
    Then we have $r(i, j, a_\ell) = r(i, j', a_\ell) = w(R_{i, j', a_\ell})$. We consider a new path $Q$ obtained by concatenating $R_{i, j', a_{\ell}}$
    and $\Path{a_{\ell}, a_{\ell+1}, \dots, a_{k-1}}$.
    The length of $Q$ is bounded as follows:
    \begin{align*}
        w(Q) &= r(i, j', a_\ell) + w(\Path{a_{\ell}, a_{\ell+1}, \dots, a_{k-1}}) \\
             &= r(i, j, a_\ell) + w(\Path{a_{\ell}, a_{\ell+1}, \dots, a_{k-1}}) \\ 
             &\leq (1+\eps_2) \cdot \mathsf{minL}(\Gamma(I^i_j, a_{\ell})) + w(\Path{a_{\ell}, a_{\ell+1}, \dots, a_{k-1}}) \\
             &\leq (1+\eps_2) \cdot \mathsf{minL}(\Gamma(I^i_j, t)).
    \end{align*}
    It is easy to check that $Q$ also exists in $G_{i, j'}$. Hence we have $\min_{0 \leq j' < j} \mathsf{minL}(\Gamma(I^i_{j'}, v)) \leq 
    w(R_{i,j, v}) \leq w(Q)$, i.e.,  the consequence side of the predicate~(\ref{eq:keypred}) holds. The lemma is proved. 
    }
\end{proof}

\begin{rlemma}{runningtimeProgDijk}
The progressive Dijkstra processes $G_{i, j}$ for all $0 \leq j < 2^i$ 
and $0 \leq i < \log p$ in
$O(\eps_2^{-1} \cdot \log n \cdot \log (nW) \cdot (m + n\log n))$ time.
\end{rlemma}

\begin{proof}
    We first bound the time for constructing $G_{i, j}$. Initially, $G_{i, 0}$ is constructed in $O(m)$ time. The graph $G_{i, j}$ is
    obtained by modifying $G_{i, j-1}$, which takes $O(\sum_{u \in V(I^i_j)} |N_G(u)|)$ time. Summing up it for all $j$, the total 
    construction time is bounded by $O(m)$. Next, we count the number of invocations of \textbf{Extract-Min}, \textbf{Insert} and 
    \textbf{Decrease-Key} respectively.
    When each vertex $u$ not on $P_T$ is inserted to the heap, $d[u]$ decreases to its $1/(1+\eps_2)$ fraction. Hence
    the total number of the invocations of \textbf{Insert} for vertex $u$ is at most $\left \lceil \log_{1+\eps_2} (nW) \right \rceil$. 
    For vertex $u \in I^i_j$, it is inserted to the heap at most $\left \lceil \log_{1+\eps_2} (nW) \right \rceil$ times in the execution
    up to processing $G_{i, j}$, and in the following execution it is never inserted. Consequently, the total number of invoking
    \textbf{Insert} and \textbf{Extract-Min} is $O(n \left \lceil \log_{1+\eps_2} (nW) \right \rceil)$. 
    Next, we count the number of invocations of \textbf{Decrease-Key}.
    For each vertex $u$, \textbf{Decrease-key} is called when its adjacent vertex of $v$ is extracted from the heap.
    That is, for one insertion of $v$ into the heap, at most $|N_G(v)|$ invocations of \textbf{Decrease-key} follows.
    Since each vertex is inserted to the heap at most $\left \lceil \log_{1+\eps_2} (nW) \right \rceil$ times, the total number of calling \textbf{Decrease-Key} is $O(m \left \lceil \log_{1+\eps_2} (nW) \right \rceil)$. The amortized operation cost of Fibonacci heap is $O(1)$ for \textbf{Insert} and \textbf{Decrease-key} and $O(\log n)$ for
    \textbf{Extract-min}, and $\log_{1+\eps_2} (nW) = O(\eps^{-1}_2 \log (nW))$ holds. 
    Consequently, the total running time of the progressive Dijkstra is $O(\eps_2^{-1} \cdot \log n \cdot \log (nW) \cdot (m + n\log n))$.
\end{proof}

\begin{rlemma}{lma:constDP}
There exists an algorithm of constructing a $P_T$-faulty $(1 + \eps_1)$-DPO of size $O(\eps_1^{-1} n \cdot \log^2 n \cdot \log (nW))$. 
The construction time is $O(\eps_1^{-1} \cdot \log^2 n \cdot \log (nW) \cdot (m + n\log n))$ and the query processing time is $O(\log n \cdot \log (\eps_1^{-1} \log (nW)))$.
\end{rlemma}

\begin{proof}
Obvious from the argument in Section~\ref{sec:outline} and Section~\ref{sec:VSDO} (recall $\eps_1 = O(\eps_2 / \log n)$).
\end{proof}

\subsection{Handling Jumping Paths}

The following lemma is the key fact of estimating $\JDist_{G - v_f}(s, t)$.

\begin{lemma} \label{lma:jumping-propery}
    For any failed vertex $v_f \in V(P_T)$ and destination $t \in V(T_2) \setminus \{z\}$, 
    $\JDist_{G-v_f}(s, t) = \Dist_{G-v_f}(s, z) + \Dist_T(z, t)$ holds 
    if $\DDist_{G-v_f}(s, t) \geq \JDist_{G-v_f}(s, t)$ is satisfied.
\end{lemma}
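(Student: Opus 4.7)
The plan is to prove both inequalities $\JDist_{G-v_f}(s,t) \geq \Dist_{G-v_f}(s,z) + \Dist_T(z,t)$ and $\JDist_{G-v_f}(s,t) \leq \Dist_{G-v_f}(s,z) + \Dist_T(z,t)$ separately; the hypothesis will be used only in the upper-bound direction. The edge cases $v_f = s$ (vacuous in Definition~\ref{dj def}) and $v_f = z$ (both sides equal $\infty$) are trivial, so we assume $1 \leq f \leq p-2$.

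For the lower bound, let $R$ be a shortest $s$-$t$ jumping path avoiding $v_f$, and write $c(R) = v_k$ with $f < k \leq p-1$. Since $v_k$ lies on $P$ it is an ancestor of $z = v_{p-1}$ in $T$, and $z$ is an ancestor of $t \in V(T_2) \setminus \{z\}$, so the tree path from $v_k$ to $t$ passes through $z$ and $\Dist_T(v_k, t) = \Dist_T(v_k, z) + \Dist_T(z, t)$. Split $R$ at $c(R)$ into a prefix $R_1$ and a suffix $R_2$. Because $T$ is an SSSP tree from $s$ and $v_k$ is an ancestor of $t$, we get $w(R_2) \geq \Dist_{G-v_f}(v_k, t) \geq \Dist_G(v_k, t) = \Dist_T(v_k, t)$. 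The concatenation of $R_1$ with the sub-path $v_k, v_{k+1}, \dots, v_{p-1}$ of $P$, which avoids $v_f$ since $f < k$, is a walk from $s$ to $z$ in $G-v_f$, so $w(R_1) + \Dist_T(v_k, z) \geq \Dist_{G-v_f}(s, z)$. Summing yields $w(R) \geq \Dist_{G-v_f}(s,z) + \Dist_T(z,t)$.

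For the upper bound, let $Q$ be a shortest $s$-$z$ path in $G-v_f$ and form the walk $W$ by concatenating $Q$ with the $T$-path from $z$ to $t$. Since this $T$-suffix lies in $V(T_2)$ and $v_f \in V(P) \setminus \{z\} \subseteq V(T_1) \setminus V(T_2)$, the walk $W$ avoids $v_f$; its length is exactly $\Dist_{G-v_f}(s, z) + \Dist_T(z, t)$. Shortcut $W$ to a simple $s$-$t$ path $R^*$ in $G-v_f$ and, if necessary, adjust the prefix via $P$ so that condition (C1) holds without increasing the length. Then $R^*$ is either a jumping or a departing path. In the jumping case, $\JDist_{G-v_f}(s,t) \leq w(R^*) \leq w(W)$ directly; in the departing case, the hypothesis $\DDist_{G-v_f}(s,t) \geq \JDist_{G-v_f}(s,t)$ gives $\JDist_{G-v_f}(s,t) \leq \DDist_{G-v_f}(s,t) \leq w(R^*) \leq w(W)$.

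The subtle point is precisely the departing case in the upper bound: $W$ may revisit some vertex $u \in V(T_2) \setminus \{z\}$ lying on both $Q$ and the $T$-path from $z$ to $t$, and shortcutting at $u$ may delete the intervening segment including $z$ itself, turning $R^*$ into a departing path. This is exactly the scenario Section~\ref{sec:outline} describes informally as the $s$-$z$ replacement path and the $z$-$t$ tree path intersecting, and it is where the hypothesis $\DDist \geq \JDist$ is invoked; without it, $W$ alone gives no direct bound on $\JDist$.
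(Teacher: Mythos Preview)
Your proof is correct and follows essentially the same idea as the paper's: replace the suffix of the shortest jumping path after $c(R)$ by the tree path through $z$, replace the prefix up to $z$ by a shortest $s$--$z$ replacement path, and use the hypothesis $\DDist \geq \JDist$ to dispose of the case where the resulting walk shortcuts to a departing path. The only organisational difference is that you separate the argument cleanly into the two inequalities, doing the lower bound by bounding $w(R_1)+\Dist_T(v_k,z)\geq \Dist_{G-v_f}(s,z)$ and $w(R_2)\geq \Dist_T(v_k,t)$, whereas the paper packages both directions into a single construction $R\to R'\to R''$; your version is slightly more explicit about why the final path is legitimately jumping (or else departing, invoking the hypothesis), a point the paper's proof glosses over.
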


\begin{proof}
    Let $R$ be the shortest $s$-$t$ jumping path for $v_f$, and $Q$ be the shortest path from $c(R)$ to $t$ on $T$.
    Obviously, $t$ is a descendant of $c(R)$ because $t \in V(T_2)$ and $c(R) \in V(P_T)$ holds.
    Thus, $Q$ is also the shortest path from $c(R)$ to $t$ on $G-v_f$.
    Let $R'$ be the walk obtained by replacing the suffix of $R$ after $c(R)$ with $Q$, and $R''$ be the walk obtained 
    by replacing the prefix of $R'$ before $z$ with the $s$-$z$ shortest path $Q'$ in $G-v_f$.
    By the shortest property of $Q$ and $Q'$, $w(R'') \leq w(R') \leq w(R)$ holds.
    If $R''$ is not a simple path, the suffix of $Q$ from $z$ intersects $Q'$. 
    Then, there is a departing path avoiding $v_f$ in $R'$ with a distance shorter than $R$. That is,
    $\DDist_{G-v_f}(s, t) \geq \JDist_{G-v_f}(s, t)$ holds, which contradicts the assumption of the lemma.
    Hence $R''$ is a simple path and a $s$-$t$ jumping path for $v_f$.
    By the shortest property of $R$, $w(R) = w(R'')$ holds. It implies 
    $\JDist_{G-v_f}(s, t) = w(R) = w(R'') = \Dist_{G-v_f}(s, z) + \Dist_T(z, t)$.
\end{proof}

By the lemma above, the length of the shortest $s$-$t$ jumping path is estimated by storing the values of 
$\Dist_{G - v_f}(s, z)$ for all $v_f \in V(P_T)$. 
Those values are computed approximately by the algorithm by Bernstein~\cite{Ber10},
which is stated as follows: 
\begin{theorem}[Bernstein \cite{Ber10}] \label{thm:(1+e)-RP}
Given any directed graphs $G$ with edge weights in range $[1, W]$ and two vertices $s, t \in V(G)$, there exists an algorithm that calculate the value $\hat{d}(x)$ such that 
$\Dist_{G-x}(s, t) \leq \hat{d}(x) \leq (1+\eps_1) \cdot \Dist_{G-x}(s, t)$ for any failed 
vertex $x \in V(G)$ in $O(\eps_1^{-1} \cdot \log^2 n \cdot \log (nW) \cdot (m + n\log n))$ 
time.
\end{theorem}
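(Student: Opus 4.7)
The theorem is stated as a citation to Bernstein, so the cleanest answer is simply to invoke \cite{RP-(1+e)}. That said, a self-contained proof can be obtained by specializing the machinery of Section~\ref{sec:VSDO} to the fixed destination $t$, and it is useful to sketch how. The plan is to let $P$ denote the $s$-$t$ shortest path in $G$ and to restrict attention to the case $x = v_f \in V(P)$, since otherwise $\Dist_{G-x}(s,t)=\Dist_G(s,t)$ is already known. As in the $P$-faulty $(1+\eps_1)$-VSDO, every $s$-$t$ replacement path avoiding $v_f$ is either departing or jumping, and the DP-oracle of Lemma~\ref{lma:constDP} applied to $(G,s,P)$ with $\eps_1 = \eps$ already yields a $(1+\eps)$-approximation of $\DDist_{G-v_f}(s,t)$ within the claimed time bound.

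For jumping paths, a coalescing vertex $c \in V(P[f+1,p-1])$ lies strictly after $v_f$ on $P$, so the suffix of the replacement path from $c$ can be replaced by $P[c,t]$ without increase in length. Hence $\JDist_{G-v_f}(s,t) = \min_{c}\bigl(\Dist_{G-v_f}(s,c)+w(P[c,t])\bigr)$. To approximate this quantity for every $v_f$ simultaneously, I would run a symmetric, ``right-sided'' progressive Dijkstra: partition $P$ dyadically from the $t$-end, and for each dyadic interval $I^i_j$ deactivate the incoming edges of later vertices on $P$ and seed the search with $c \in V(I^i_j)$ at initial cost $w(P[c,t])$. Combining with $\Dist_{G-v_f}(s,\cdot)$ via the same progressive activation mechanism used in Algorithm~\ref{progdijk} yields a function $r^{\mathrm{jump}}$ satisfying analogues of the conditions of Lemma~\ref{lma:dporacle}, and the query $r'(v_f,t)$ extracted from this function approximates $\JDist_{G-v_f}(s,t)$ within a factor of $1+\eps$.

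The main obstacle is verifying that the values produced by the reverse pass correspond to paths that genuinely avoid $v_f$: one needs the analogue of Corollary~\ref{corol:dporacle} stating that the implicit underlying $s$-$c$ path has its branching vertex in $V(P[0,f-1])$, so that its concatenation with $P[c,t]$ is a valid jumping replacement path avoiding $v_f$. Once this structural claim is established, taking the minimum of the departing estimate and the jumping estimate yields $\widehat{\Dist}_{G-x}(s,t)$ with the required approximation quality. The running time is dominated by $O(\log p)$ progressive-Dijkstra passes on each side of $P$, each costing $\tilde{O}(\eps^{-1}(m+n\log n))$ by Lemma~\ref{runningtimeProgDijk}, matching the bound in the theorem statement.
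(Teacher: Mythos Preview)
You are correct that the paper does not prove this statement: it is imported from Bernstein~\cite{RP-(1+e)} and used as a black box in Section~\ref{sec:VSDO} to obtain the approximate values $\widehat{\Dist}_{G-v_f}(s,z)$. Your opening sentence is therefore the complete answer from the paper's point of view, and the progressive-Dijkstra machinery of Section~\ref{sec:VSDO} is itself adapted from~\cite{RP-(1+e)}, so there is no circularity in invoking the citation.

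Your self-contained sketch, however, glosses over the actual difficulty in the jumping-path half. Writing $\JDist_{G-v_f}(s,t)=\min_{c>f}\bigl(\DDist_{G-v_f}(s,c)+w(P[c,t])\bigr)$ is correct, but you need this for every $v_f\in V(P)$, and a direct evaluation via the DP-oracle costs $\Theta(p)$ queries per failed vertex, which is too slow. A right-sided progressive Dijkstra does give approximate backward costs, but the phrase ``combining with $\Dist_{G-v_f}(s,\cdot)$ via the same progressive activation mechanism'' does not explain how to pair a forward interval (containing the branching vertex) with a backward interval (containing the coalescing vertex) so that $v_f$ lies strictly between them, nor how to take the minimum over the meeting vertex without paying $\Theta(n)$ per $v_f$. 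That coupling of the two passes, and the proof that the combined estimate satisfies an analogue of Lemma~\ref{lma:dporacle}, is precisely the technical core of~\cite{RP-(1+e)}; the obstacle you flag (that the implicit path genuinely avoids $v_f$) is only one ingredient. As a citation your answer is complete; as a standalone argument the aggregation step is missing.
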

The computed results are stored into an array of $O(n)$ size, and can be accessed in $O(1)$ time. 
Hence the query time for jumping paths is $O(1)$.
Combining the construction of the DP-oracle (Lemma~\ref{lma:constDP}), we obtain the following lemma.
\begin{lemma} \label{lma:constPVSDO}
There exists an algorithm of constructing a $P_T$-faulty $(1 + \eps_1)$-VSDO of size $O(\eps_1^{-1} n \cdot \log^2 n \cdot \log (nW))$. 
The construction time takes $O(\eps_1^{-1} \cdot \log^2 n \cdot \log (nW) \cdot (m 
+ n\log n))$ and the query processing time is $O(\log n \cdot (\eps_1^{-1} \log (nW)))$.
\end{lemma}

\begin{proof}
\sloppy{
The construction time and query processing time follow Lemmas~\ref{lma:constDP} and \ref{thm:(1+e)-RP}. Hence we focus on the 
correctness. For any query $(v_f, t)$, $P_T$-faulty $(1 + \eps_1)$-VSDO returns $\min(\Dquery(v_f, t), \hat{d}(v_f) +\Dist_T(z, t))$, where $\hat{d}(v_f)$ is the $(1 + \eps_1)$-approximate value of $\Dist_{G - v_f}(s, z)$ computed
by the algorithm of Theorem~\ref{thm:(1+e)-RP}. If $\DDist_{G-v_f}(s, t) \leq \JDist_{G-v_f}(s, t)$ holds, 
we obtain $\Dist_{G - v_f}(s, x) \leq \Dquery(v_f, t) \leq (1 + \eps_1) \Dist_{G-v_f}(s, x)$ by Lemma \ref{lma:constDP}. 
Since $\Dist_{G - v_f}(s, x) \leq \hat{d}(v_f) +\Dist_T(z, t))$ is obvious, the oracle correctly returns a 
$(1 + \eps_1)$-approximate value of $\Dist_{G - v_f}(s, t)$. In the case of $\DDist_{G-v_f}(s, t) > \JDist_{G-v_f}(s, t)$, 
Lemma~\ref{lma:jumping-propery} and Theorem~\ref{thm:(1+e)-RP} imply that $\hat{d}(v_f) + \Dist_T(z, t)$ is 
a $(1+\eps_1)$-approximation of $\Dist_{G-v_f}(s, x)$.
}
\end{proof}

\section{Proof of Lemma~\ref{lma:G2-correctness}}

\begin{rlemma}{lma:G2-correctness}
    For any failed vertex $x \in V(T_2) \setminus \{z\}$ and destination $t \in V(T_2) \setminus \{z\}$, 
    $\Dist_{G_2-x}(s, t) = \Dist_{G-x}(s, t)$ holds.
\end{rlemma}

\begin{proof}
To prove $\Dist_{G_2-x}(s, t) \leq \Dist_{G-x}(s, t)$,  
            we show that for any shortest $s$-$t$ path in $G-x$ there exists a $s$-$t$ path with the same length in $G_2-x$.
            Let $R_G = \Path{a_0, a_1, \cdots, a_{k-1}}$ $(s=a_0, t=a_{k-1})$ be any shortest $s$-$t$ path in $G-x$, and $a_i$ 
            be the vertex in $V(R_G) \cap V(T_2)$ with the largest $i$.
            Consider the path $R'_G$ obtained by concatenating the $s$-$a_{i-1}$ path $Q$ in $T$ and $\Path{a_{i-1}, a_i, \dots, a_{k-1}}$.
            It is obvious that $R'_G$ avoids $x$.
            Since $Q$ is the shortest, $w(R'_G)=w(R_G)=\Dist_{G-x}(s, t)$ holds.
            Due to the construction of $G_2$, there exists an edge from $s$ to $a_i$ of weights $\Dist_T(s, a_{i-1})+w_G(a_{i-1}, a_i)$.
            Thus there exists a path $R_{G_2} = \Path{s, a_i, a_{i+1}, \dots, a_{k-1}}$ avoiding $x$ in $G_2$, i.e., a path in $G_2 - x$.

Next we prove $\Dist_{G_2-x}(s, t) \geq \Dist_{G-x}(s, t)$. For the proof, 
            we show that for any $s$-$t$ path in $G_2-x$, there exists a $s$-$t$ path with the same length distance in $G-x$.
            Let $R_{G_2} = \Path{a_0, a_1, \dots, a_{k-1}}$ $(s=a_0, t=a_{k-1})$ be any $s$-$t$ path on $G_2-x$.
            Since $s$ and $a_1$ are connected by an edge, there exists a vertex $y \in V(T_1)$ such that $\Dist_T(s, y) + w(y, a_1)$.
            Consider the path $R_G$ obtained by concatenating the $s$-$y$ path in $T$ and $\Path{y, a_1, a_2, \dots, a_{k-1}}$.
            Obviously, $R_G$ is a path in $G-x$, and thus $w(R_G) = w(R_{G_2}) = \Dist_{G_2-x}(s, t)$ holds.
\end{proof}

\section{Proof of Lemma~\ref{lma:G1-correctness}}

For proving Lemma~\ref{lma:G1-correctness}, we introduce an auxiliary lemma below.

\begin{lemma} \label{lma:pisbetter}
For any $e = (v_b, v_c) \in F_2$, $\Dist_{P_T}(v_b, v_c) \leq w(e)$ holds. In addition, if $x \in V(P_T) \setminus \{v_b, v_c\}$ holds, $G - x$ contains a $v_b$-$v_c$ path of length $w(e)$.
\end{lemma}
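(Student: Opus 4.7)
The plan is to unpack the edge $e = (v_b, v_c) \in F_2$ back through the progressive Dijkstra to extract an explicit $v_b$-$v_c$ path $Q'$ in $G$ of weight $w(e)$ whose internal vertices lie outside $V(P)$; both claims of the lemma then fall out of this single object. Concretely, by step 4 of the $G_1$ construction, $e$ arises from an entry $(\cdot, \ell, v_b) \in \mathsf{upd}(i, v_c)$ with $v_c \in V(P)$, and $w(e) = \ell - \Dist_T(s, v_b)$. Corollary~\ref{corol:dporacle} supplies an $s$-$v_c$ departing path $Q$ of length $\ell$ with branching vertex $v_b$; since $P$ lies in the shortest-path tree $T$, I may take the prefix of $Q$ up to $v_b$ to be the sub-path $P[0, v_b]$ of length $\Dist_T(s, v_b)$ without increasing the total length, and the remaining suffix is the desired $Q'$ of weight $w(e)$.

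For the first claim, since $P \subseteq T$ and $T$ is a shortest-path tree of $G$, I have $\Dist_P(v_b, v_c) = \Dist_T(s, v_c) - \Dist_T(s, v_b) = \Dist_G(s, v_c) - \Dist_G(s, v_b)$, and the triangle inequality in $G$ then gives $\Dist_P(v_b, v_c) \leq \Dist_G(v_b, v_c) \leq w(Q') = w(e)$.

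For the second claim, it suffices to show that every internal vertex of $Q'$ lies outside $V(P)$, for then $Q'$ persists in $G - x$ for any $x \in V(P) \setminus \{v_b, v_c\}$. This falls straight out of the departing-path definition: since $Q$ avoids some successor $v_f$ of $v_b$ on $P$, condition (C1) together with simplicity of $Q$ forces every vertex of $V(Q) \cap V(P[0, f-1])$ onto the prefix up to $v_b$ (hence not into the interior of $Q'$), condition (C2) forbids interior vertices of $Q'$ from lying in $V(P[f+1, p-1]) \setminus \{v_c\}$, and $v_f$ itself is absent from $V(Q)$. The degenerate case $v_b = v_c$ (forcing $w(e) = 0$, with the trivial zero-length walk at $v_b$) is handled separately in one line.

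I do not anticipate any serious obstacle; the argument is essentially reading off what Corollary~\ref{corol:dporacle} already encodes together with the shortest-path-tree property. The only mildly subtle step is that the corollary in isolation merely asserts the existence of \emph{some} departing path of length $\ell$ with branching vertex $v_b$, not one whose prefix equals $P[0, v_b]$; since $P \subseteq T$, normalizing the prefix to $P[0, v_b]$ cannot lengthen the path, and this normalization is what lets the suffix weight line up exactly with $w(e)$.
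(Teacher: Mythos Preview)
Your approach is correct and essentially identical to the paper's: both invoke Corollary~\ref{corol:dporacle} to obtain an $s$-$v_c$ departing path $Q$ of length $\ell$ with branching vertex $v_b$, take its suffix $Q'$ from $v_b$, and observe that the interior of $Q'$ avoids $V(P)$ by conditions (C1)--(C2). One small imprecision: your normalization step only yields $w(Q') \leq w(e)$, not equality, since replacing the prefix leaves the suffix length $\ell - L$ unchanged (where $L$ is the original prefix length); the exact equality $w(Q') = w(e)$ actually comes from the fact that the paths recorded in $\mathsf{upd}$ arise in $G_{i,j}$, where the prefix is by construction the edge $(s', v_b)$ of weight $\Dist_P(s, v_b) = \Dist_T(s, v_b)$, so $L = \Dist_T(s, v_b)$ automatically and no normalization is needed. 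The paper makes this same implicit assumption without comment, so your proof is at least as complete.
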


\begin{proof}
By the construction of $F_2$ and Corollary~\ref{corol:dporacle}, there exists a $s$-$v_c$ 
departing path $Q$ of length $w(e) + \Dist_T(s, v_b)$ with branching vertex $v_b$ in $G$.
That is, the $v_b$-$v_c$ sub-path $Q'$ of $Q$ has length $w(e)$. Hence 
$\Dist_G(v_b, v_c) = \Dist_{P_T}(v_b, v_c) \leq w(e)$ holds. Since $Q'$ is the suffix from $b(Q)$, 
$V(Q') \setminus \{v_c\}$ does not intersect $V(P_T)$. If $x \in V(P_T)$, $Q'$ is also a path in $G - x$.
\end{proof}

The proof of Lemma~\ref{lma:G1-correctness} is given below:
\begin{rlemma}{lma:G1-correctness}
    For any failed vertex $x \in V(T_1)$ and destination $t \in V(T_1)$, the following conditions hold:
    \begin{itemize}
        \item If $x \not\in V(P_T)$, $\Dist_{G_1-x}(s, t) =  \Dist_{G-x}(s, t)$.
        \item If $x \in V(P_T)$ and $\JDist_{G -x}(s,t) < \DDist_{G - x}(s, t)$ holds, 
        $\Dist_{G - x}(s, t) \leq \Dist_{G_1-x}(s, t) \leq (1 + \eps_1) \cdot \Dist_{G -x}(s, t)$.
        \item Otherwise, $\Dist_{G-x}(s, t) \leq  \Dist_{G_1-x}(s, t)$.
    \end{itemize}
\end{rlemma}

\begin{proof}
    For proving the lemma, we associate each edge $e \in F_1 \cup F_2$ with the path $Q_e$ in $G$, which is defined
    as follows:
    \begin{itemize}
    \item If $e = (z, u) \in F_1$ holds, $Q_e$ is the shortest $z$-$u$ path in $G[\Iset(V(T_2))]$. 
    \item If $e = (v_b, v_c) \in F_2$ holds, $Q_e$ is the sub-path from $v_b$ to $v_c$ of the 
    $s$-$v_c$ departing path which is the shortest among the ones with branching vertex $v_b$ (if two or more shortest paths exist, chosen arbitrarily). Such a sub-path necessarily exists by Lemma~\ref{lma:pisbetter}.
    \end{itemize}
    Obviously, $w(e) = w(Q_e)$ holds for any $e \in F$. Let $R_{G_1}$ be the shortest $s$-$t$ path in $G_1 - x$.
    We first prove the lower bound side of all cases. Consider the case of $x \not\in V(P_T)$. If $R_{G_1}$ contains an edge 
    $e = (v_b, v_c)$ in $F_2$, one can remove this by replacing $e$ with the $v_b$-$v_c$ path in $P_T$ without increasing the length. Hence we assume that $R_{G_1}$ does not contain any edge in $F_2$. For any edge $e \in F_1$, $Q_e$ does not intersect $V(T_1) \setminus \{z\}$,
    and thus $Q_e$ does not contain $x$. Consequently, the path $R'_{G_1}$ obtained by replacing all $e \in F_1$ with $Q_e$ is a
    path in $G - x$, and satisfies $w(R'_{G_1}) = w(R_{G_1})$. It implies $\Dist_{G - x}(s, t) \leq \Dist_{G_1 - x}(s, t)$. Next,
    consider the case of $x \in V(P_T)$. By Lemma~\ref{lma:pisbetter}, one can assume that any path $Q_e$ for $e \in F_2$ does not contain 
    $x$, and thus it is a path in $G - x$. Hence by replacing each edge $e \in E(R_{G_1}) \cap F$ with $Q_e$, we obtain the path $R'_{G_1}$
    in $G - x$ with length $w(R'_{G_1}) = w(R_{G_1})$. It also implies $\Dist_{G - x}(s, t) \leq \Dist_{G_1 - x}(s, t)$.
    
    We prove the upper bound side of the first two cases. Let $R_G$ be the shortest $s$-$t$ path in $G - x$. If $R_G$ does not contain 
    any vertex in $V(T_2) \setminus \{z\}$, the conditions obviously hold, and thus 
    we assume $V(R_G) \cap (V(T_2) \setminus \{z\}) \neq \emptyset$.
    Consider the case of $x \not\in V(P_T)$. Let $R_G$ be the shortest $s$-$t$ path in 
    $G - x$. If $R_G$ does not contain any vertex in $V(T_2)$,  it is also a path in $G_1 - x$ and thus $\Dist_{G - x} \geq \Dist_{G_1 - x}(s, t)$ obviously holds. 
    Otherwise, let $(u, v)$ be the last backward edge from $V(T_2) \setminus \{z \}$ to $V(T_1)$ in $R_G$. 
    Since the $s$-$u$ path $Q$ in $T$ does not contain $x$, it is also the shortest $s$-$u$ path in $G - x$. 
    Hence one can assume that the prefix of $R_G$ up to $u$ is equal to $Q$ without loss of generality. 
    It implies that $R_G$ contains $z$, and for its sub-path from $z$ to $v$ there exists an edge $(z, v) \in F_1$ of weight $\Dist_T(z, u) + w_G(u, v)$. By replacing $Q$ in $R_G$ with
    $(z, v) \in F_1$, we obtain a $s$-$t$ path in $G_1 - x$ with length $w(R_G)$, i.e., we conclude 
    $\Dist_{G_1 - x}(s, t) \leq \Dist_{G -x}(s, t)$. Next consider the case of $x \in V(P_T)$ and $\JDist_{G -x}(s,t) < \DDist_{G - x}(s, t)$. Then $R_G$ is a jumping path avoiding $x$. Since $R_G$ contains a vertex in $V(T_2) \setminus \{z\}$, one can assume
    that $R_G$ contains $z$ and a backward edge from $V(T_2) \setminus \{z\}$ to $V(T_1)$ by the same argument as the proof of Lemma~\ref{lma:jumping-propery}. Let $(u, v)$ be the last backward edge in $R_G$. 
    Then the length of the sub-path $W_G$ from $z$ to $v$ of $R_G$ must be equal to $w(Q_e)$ for $e = (z, v) \in F_1$. Let $W'_G$ be the 
    prefix of $R_G$ up to $c(R_G)$. Since it is the shortest $s$-$c(R_G)$ departing path avoiding $x$,
    by Lemma~\ref{lma:pisbetter}, there exists $i$ and $(\cdot, \ell, b(R_G)) \in \mathsf{upd}(i, c(R_G))$ such that
    $w(W'_G) \leq \ell (1 + \eps_1) w(W'_G)$ holds. It implies that there exists an edge $e' = (b(R_G), c(R_G)) \in F_1)$ of weight 
    $\ell - \Dist_T(s, b(R_G))$. We obtain $R'_G$ by replacing its sub-paths $W_G$ and $W'_G$ with $e$ and $e'$ respectively.
    Then we have $w(R'_G) = \Dist_T(s, b(R_G)) + (\ell - \Dist_T(s, b(R_G))) + \Dist_T(c(R_G), z) + w(W_G) = \ell + \Dist_T(c(R_G), z) + w(W_G)$, which is bounded by
    $(1 + \eps_1) \cdot w(W'_G) + \Dist_T(c(R_G), z) + w(W_G) \leq (1 + \eps_1) \cdot w(R_G)$.
\end{proof}

\subsection{Omitted Proofs in Section~\ref{sec:constAndQuery}}

\begin{rlemma}{sizeOracle}
    The construction time of the $(1 + \eps)$-VSDO for $G$ is 
    $O(\eps^{-1} \log^4 n \cdot \log (nW) (m + n \eps^{-1} \cdot \log^3 n \cdot \log (nW)))$ 
    and the size of the constructed oracle is $O(\eps^{-1} n \log^3 n \cdot \log (nW))$.
\end{rlemma}

\begin{proof}
    Consider the construction of $G^{2i}$ and $G^{2i+1}$ from $G^i$. Let $T_i$, $T_{2i}$, and $T_{2i+1}$ be the shortest path trees 
    of $G^i$, $G^{2i}$, and $G^{2i+1}$, and $z_i$ be the centroid of $T_i$.
    Since $T_{2i}$ and $T_{2i+1}$ share the centroid $z_i$, $|V(G^{2i})| + |V(G^{2i+1})| = |V(G^i)|+1$ holds.
    Let $X_i$ be the set of the edges crossing between $G^i[V(T_{2i}]$ and $G^i[V(T_{2i+1}] \setminus \{z_i\}$.
    The total number of the edges added for constructing $G^{2i + 1}$ and the edges in $F_1$ for constructing $G^{2i}$ 
    is bounded by $|X_i|$. Thus the essential increase of the number of edges is caused by the addition of $F_2$ for constructing 
    $G^{2i}$, i.e., the sum of the cardinality of $\mathsf{upd}(i, v)$ for all $0 \leq i \leq \log p$ and $v \in V(G^i)$.
    It is bounded by $O(|V(G^i)| \cdot \left \lceil \log_{1+\eps_2} (nW) \right \rceil)$. Consequently, we have 
    $|E(G^{2i})|+|E(G^{2i+1})| \leq |E(G^i)| + |V(G^i)| \cdot \left \lceil \log_{1+\eps_2} (nW) \right \rceil$.

    We bound the total number of recursions for constructing the oracle. The centroid bipartition of $T_i$ divides their edges
    into two disjoint subsets. That is, we have $E(T_i) = E(T_{2i}) \cup E(T_{2i+1})$ and $E(T_{2i}) \cap E(T_{2i+ 1}) = \emptyset$.
    Then $O(n)$ recursive calls suffice to reach the decomposition of $G$ into the set of constant-size ($\leq 6$) subgraphs.
    Using the (in)equalities $|V(G_{2i})| + |V(G_{2i+1})| = |V(G^i)|+1$ and $|E(G^{2i})|+|E(G^{2i+1})| \leq |E(G^i)| + |V(G^i)| \cdot \left \lceil \log_{1+\eps_2} (nW) \right \rceil$, we obtain $\sum_{G' \in \mathcal{G}_h} |V(G')| \leq 2n-2$ and 
    $\sum_{G' \in \mathcal{G}_h} |E(G')| \leq m + (2n-2) \cdot h \cdot \lceil \log_{1+\eps_2} (nW) \rceil
    = O(m + n \eps^{-1}_1 \cdot \log^2 n \cdot \log (nW))$. Since it is obvious that the recursion depth is bounded by $O(\log n)$,
    the total construction time and size of the sub-oracles for all subgraphs in $\mathcal{G}_h$ are respectively 
    bounded by $O(\eps^{-1}_1 \log^3 n \cdot \log (nW) (m + n \eps^{-1}_1 \cdot \log^2 n \cdot \log (nW)))$ and $O(\eps^{-1}_1 n \log^2 n \cdot \log (nW))$. 
    By $\eps_1 = O(\eps / \log n)$, the total construction time and size over all depths are also bounded as the statement of the lemma. 
\end{proof}

\begin{rlemma}{querytime}
    For any $x \in V(G) \setminus \{s \}$ and $t \in V(G)$, the running time of $G.\Query(x, t)$ is $O(\log^2 n \cdot \log (\eps^{-1}\log (nW)))$.
\end{rlemma}

\begin{proof}
The time for query processing at each recursion level is dominated by the query to the DP-oracle (notice that the time for 
processing $\Fquery$ is also dominated by the query to the DP-oracle inside), which takes 
$O(\log n \cdot \log (\eps^{-1} \log (nW)))$ time. Since the query processing algorithm traverses a path in the recursion tree, 
the total number of queries to the DP-oracles is $O(\log n)$. The lemma is proved.
\end{proof}

\begin{rlemma}{lma:correctnessWholeoracle}
    For any query $(x, t)$ ($x$ is a failed vertex and $t$ is a destination vertex), let $\hat{d}_{G^i - x}(s, t)$ be the result of $G^i.\Query(x, t)$. Then, $\Dist_{G-x}(s, t) \leq \hat{d}_{G^1 -x}(s, t) \leq (1+\eps) \cdot \Dist_{G-x}(s, t)$ holds.
\end{rlemma}

\begin{proof}
    Let $G^{i_1}, G^{i_2}, \dots, G^{i_k}$ be the path in the recursion tree traversed in processing $G^1.\Query(x, t)$ 
    ($G^{i_1} = G^1$, and the recursion terminates at $G^{i_k}$).
    By Lemmas~\ref{lma:G2-correctness} and \ref{lma:G1-correctness}, $\Dist_{G^{i_j}-x}(s, t) \leq \Dist_{G^{i_{j+1}} - x}(s, t)$ always holds
    for any $1 \leq j < k$. Then it is easily shown that the values $\hat{d}_{G^{i_j} -x}(s, t)$ returned by each $G^{i_j}.\Query(x, t)$ is lower bounded by $\Dist_{G^i_j - x}(s, t)$. Hence it suffices to show 
    $\hat{d}_{G^{i_j} -x}(s, t) \leq (1 + \eps) \Dist_{G^{i_j} - x}(s, t)$ holds for some $1 \leq j \leq k$. Let 
    $j'$ be the smallest value satisfying either one of them.
    \begin{itemize}
    \item $j' = k$ holds.
    \item $\DDist_{G^{i_{j' + 1}} - x}(s, t) \leq \JDist_{G^{i_{j'+ 1}} - x}(s, t)$ holds.
    \end{itemize}
    \sloppy{
    First, consider the case of $j' = k$. There are three possible sub-cases:
    \begin{itemize}
    \item $x \in V(P_{T, i_{j'}})$ and $t \in V(G^{2i_{j'} +1}) \setminus \{z_{i_{j'}} \}$. 
    \item $x \not\in V(P_{i_{j'}})$ and $V(G^{2i_{j'} +1}) \setminus \{z_{i_{j'}} \}$.
    \item $|v(G^{i_{j'}})| \leq 6$.
    \end{itemize}
    In the first sub-case, the value returned by 
    $G^{i_{j'}}.\Fquery(x, t)$ is a $(1 + \eps_1)$-approximation of $\Dist_{G^{i_{j'}} - x}(s, t)$. 
    In the second and third sub-cases, $G^{i_{j'}}.\Query(x, t)$ obviously returns $\Dist_{G^{i_{j'}} - x}(s, t)$. 
    Next, consider the case of $\DDist_{G^{i_{j' + 1}} - x}(s, t) \leq \JDist_{G^{i_{j'+ 1}} - x}(s, t)$.
    Then $G^{i_{j'}}.\Dquery(x, t)$ returns 
    a $(1 + \eps_1)$-approximation of $\DDist_{G^{i_{j'}} - x}(s, t) = \min\{\DDist_{G^{i_{j'}} - x}(s, t), \JDist_{G^{i_{j'}} - x}(s, t) \} = \Dist_{G^{i_{j'}} - x}(s, t)$ by Lemma 
    \ref{lma:constDP}. In any case, we have 
    $\hat{d}_{G^{i_{j'}}-x}(s, t) \leq (1+\eps_1) \cdot \Dist_{G^{i_{j'}}-x}(s, t)$.
    By Lemmas~\ref{lma:G2-correctness} and \ref{lma:G1-correctness}, we have 
    $\Dist_{G^{i_{j}} - x} \leq (1 + \eps_1) \cdot \Dist_{G^{i_{j - 1}} - x}(s, t)$ for any $1 < j \leq j'$. That is,
    we have $\Dist_{G^{i_{j'}} - x} \leq (1 + \eps_1)^{j' - 1} \Dist_{G^{1} - x}(s, t)$. 
    It implies $\hat{d}_{G^{i_{j'}}-x}(s, t) \leq (1+\eps_1)^{j'} \cdot \Dist_{G^{1}-x}(s, t)$. Since $j' \leq 2\log n$,
    $(1 + \eps_1)^{j'} \leq (1 + \eps)$ holds. The lemma is proved.}
\end{proof}

\end{document}